\newif\ifcomment
\newif\iffigabbrv
\newcommand{\figtext}{\iffigabbrv Fig.\else Figure\fi}
\newtheorem{theorem}{Theorem}
\newtheorem{lemma}[theorem]{Lemma}
\newtheorem{definition}[theorem]{Definition}
\title{Programming Active Cohesive Granular Matter with Mechanically Induced Phase Changes}
\author{
    Shengkai Li \\
        School of Physics\\
        Georgia Institute of Technology\\
        Atlanta, GA 30332, USA \\
        \texttt{shengkaili@gatech.edu} \\
    \And
    Bahnisikha Dutta \\
        School of Electrical\\ and Computer Engineering\\
        Georgia Institute of Technology\\
        Atlanta, GA 30332, USA \\
        \texttt{bahnisikhadutta@gatech.edu} \\
    \And
    Sarah Cannon \\
        Mathematical Sciences\\
        Claremont McKenna College\\
        Claremont, CA 91711, USA \\
        \texttt{scannon@cmc.edu} \\
    \And
    Joshua J.\ Daymude \\
        Computer Science, CIDSE\\
        Arizona State University\\
        Tempe, AZ 85281, USA \\
        \texttt{jdaymude@asu.edu} \\
    \And
    Ram Avinery \\
        School of Physics\\
        Georgia Institute of Technology\\
        Atlanta, GA 30332, USA \\
        \texttt{ravinery3@gatech.edu} \\
    \And
    Enes Aydin \\
        School of Physics\\
        Georgia Institute of Technology\\
        Atlanta, GA 30332, USA \\
        \texttt{enes.aydin@physics.gatech.edu} \\
    \And
    Andr\'ea W.\ Richa \\
        Computer Science, CIDSE\\
        Arizona State University\\
        Tempe, AZ 85281, USA \\
        \texttt{aricha@asu.edu} \\
    \And
    Daniel I.\ Goldman \\
        School of Physics\\
        Georgia Institute of Technology\\
        Atlanta, GA 30332, USA \\
        \texttt{daniel.goldman@physics.gatech.edu} \\
    \And
    Dana Randall \\
        School of Computer Science\\
        Georgia Institute of Technology\\
        Atlanta, GA 30332, USA \\
        \texttt{randall@cc.gatech.edu}
}
\date{}
\begin{document}

\maketitle

\begin{abstract}
    Active matter physics and swarm robotics have provided powerful tools for the study and control of ensembles driven by internal sources.
    At the macroscale, controlling swarms typically utilizes significant memory, processing power, and coordination unavailable at the microscale --- e.g., for colloidal robots, which could be useful for fighting disease, fabricating intelligent textiles, and designing nanocomputers.
    To develop principles that that can leverage physics of interactions and thus can be utilized across scales, we take a two-pronged approach: a theoretical abstraction of self-organizing particle systems and an experimental robot system of active cohesive granular matter that intentionally lacks digital electronic computation and communication, using minimal (or no) sensing and control, to test theoretical predictions.
    We consider the problems of aggregation, dispersion, and collective transport.
    As predicted by the theory, as a parameter representing interparticle attraction increases, the robots transition from a dispersed phase to an aggregated one, forming a dense, compact collective.
    When aggregated, the collective can transport non-robot ``impurities'' in their environment, thus performing an emergent task driven by the physics underlying the transition.
    These results point to a fruitful interplay between algorithm design and active matter robophysics that can result in new nonequilibrium physics and principles for programming collectives without the need for complex algorithms or capabilities.
\end{abstract}

\section*{Introduction}

Self-organizing collective behaviors are found throughout nature, including shoals of fish aggregating to intimidate predators~\cite{Magurran1990}, fire ants forming rafts to survive floods~\cite{Mlot2011}, and bacteria forming biofilms to share nutrients when they are metabolically stressed~\cite{Liu2015}.
Inspired by such systems, researchers in swarm robotics and programmable active matter have used many approaches towards enabling ensembles of simple, independent units to cooperatively accomplish complex tasks~\cite{Brambilla2013,Bayindir2016,Dorigo2020-futureswarms}.
Both control theoretic and distributed computing approaches have achieved some success, but often rely critically on robots computing and communicating complex state information, requiring relatively sophisticated hardware that can be prohibitive at small scales~\cite{Elamvazhuthi2019-meanfield,Flocchini2019}.
Alternatively, statistical physics approaches model swarms as systems being driven away from thermal equilibrium by robot interactions and movements (see, e.g.,~\cite{Mayya2019,Notomista2019}). 
Tools from statistical physics such as the Langevin and Fokker-Planck equations can then be used to analyze the mesoscopic and macroscopic system behaviors~\cite{Hamann2018-swarmrobotics}.
Current approaches present inherent tradeoffs, especially as individual robots become smaller and have limited functional capabilities~\cite{Hines2017-softactuators,Xie2019} or approach the thermodynamic limits of computing and power~\cite{Wolpert2019}.

To apply to a general class of micro- or nano-scale devices with limited capabilities, we focus on systems of autonomous, self-actuated entities that utilize strictly local interactions to induce macroscale behaviors.
Two behaviors of interest are \textit{dynamic free aggregation}, where agents gather together without preference for a specific aggregation site (see Section 3.2.1 of~\cite{Bayindir2016}), and \textit{dispersion}, its inverse.
These problems are widely studied, but most work either considers robots or models with relatively powerful capabilities --- e.g., persistent memory for complex state information~\cite{Rubenstein2014,Piranda2018} or long-range communication and sensing~\cite{Fates2011,Gauci2014,Ozdemir2019} --- or lack rigorous mathematical foundations explaining the generality and limitations of their results as sizes scale~\cite{Garnier2009,Correll2011,Li2019}.
Recent studies on active interacting particles~\cite{Agrawal2017-tunablestructures} and inertial, self-organizing robots~\cite{Deblais2018-boundarycontrol} employ physical models to treat aggregation and clustering behaviors, but neither prove behavior guarantees that scale with system size and volume.
Supersmarticle ensembles~\cite{Savoie2019-smarticleensemble} are significantly more complex, exhibiting many transient behavioral patterns stemming from their many degrees of freedom and chaotic interactions, making them less amenable to rigorous algorithmic analysis.

Here we take a two-pronged approach to understanding the fundamental principles of programming task-oriented matter that can be implemented across scales without requiring sophisticated hardware or traditional computation that leverages the physics of local interactions.
We use a theoretical abstraction of self-organizing particle systems (SOPS), where we can design and rigorously analyze simple distributed algorithms to accomplish specific goals that are flexible and robust to errors.
We then build a new system of deliberately rudimentary active ``cohesive granular robots'' (which, to honor granular physics pioneer Robert Behringer, we call ``BOBbots'' for Behaving, Organizing, Buzzing robots) to test whether the theoretical predictions can be realized in a real-world damped driven system.
Remarkably, the lattice based equilibrium model quantitatively captures the aggregation dynamics of the robots.  
With a provable algorithmic model and even simpler BOBbots capturing the algorithm's essential rules, we next explore how contact stress sensing --- a capability that is readily available in the robotic platform but interestingly not easily computable by a strictly local, distributed algorithm --- can enhance aggregation performance, as suggested by insights from the theoretical model.
This complementary approach demonstrates a new integration of the fields of distributed algorithms, active matter, and granular physics that navigates a translation from theoretical abstraction to practice, utilizing methodologies inherent to each field.

\section*{Results}

\subsection*{Aggregation algorithm}

While many systems use interparticle attraction and sterical exclusion to achieve system-wide aggregation and interparticle repulsion to achieve dispersion, these methods typically use some long-range sensing and tend to be nonrigorous, lacking formal proofs guaranteeing desirable system behavior.
To better understand these collective behaviors, the abstract model of \textit{self-organizing particle systems} (SOPS) allows us to define a formal distributed algorithm and rigorously quantify long-term behavior.
Particles in a SOPS exist on the nodes (or vertices) of a lattice,
with at most one particle per node, and move between nodes along lattice edges.
Each particle is anonymous (unlabeled), interacts only with particles occupying adjacent lattice nodes, and does not have access to any global information such as a coordinate system or the total number of particles.

In earlier work, Cannon et al.~\cite{Cannon2016} analyzed a distributed SOPS algorithm for aggregation and dispersion
under the assumption that the particle system remained simply connected (i.e., the system forms a single connected cluster with no holes).
This SOPS algorithm defines a finite Markov chain with local moves that connect the state space of all simply connected configurations of particles.
Moves are defined so that each particle, when activated by its own Poisson clock (i.e., after a delay chosen at random from a Poisson distribution with constant mean), chooses a random neighboring node and moves there with a probability that is a function of the number of neighbors in the current and new positions provided the node is unoccupied and the move satisfies local conditions that guarantee the configuration stays simply connected.
In particular, for configurations $\sigma$ and $\tau$ differing by the move of a single particle $p$ along a lattice edge, the transition probability is defined as $P(\sigma, \tau) \propto \min(1, \lambda^{n'-n}),$ where $\lambda > 0$ is a bias parameter that is an input to the algorithm, $n$ is the number of neighbors of $p$ in $\sigma$ and $n'$ is the number of neighbors of $p$ in $\tau$.
These probabilities arise from the celebrated Metropolis--Hastings algorithm~\cite{Metropolis1953,Hastings1970} and are defined so that the Markov chain converges to a unique Boltzmann distribution $\pi$ such that $\pi(\sigma)$ is proportional to $\lambda^{E(\sigma)},$ where $E(\sigma)$ is the number of nearest neighbor pairs in $\sigma$ (i.e., those pairs that are adjacent on the lattice).

It was shown in \cite{Cannon2016} that the connected SOPS ensemble provably aggregates into a compact conformation when $\lambda > 3.42$ and expands to a conformation with nearly maximal (linear) perimeter when $\lambda < 2.17$ with high probability, i.e., with a probability of failure that is exponentially small in $N$, the number of particles.
However, despite rigorously achieving both aggregation and dispersion, this distributed algorithm has two notable drawbacks that make it infeasible for direct implementation in a physical system of simple robots: the connectivity requirement that tethers the particles together and the ``look ahead'' requirement used to calculate transition probabilities ensuring convergence to the desired Boltzmann distribution.

To address these issues, we define a modified aggregation and dispersion algorithm $\cal{M}_{\rm AGG}$ where particles can disconnect and moves rely only on the current state.
Here, particles occupy nodes of a finite region of the triangular lattice, again moving stochastically and favoring configurations with more pairs of neighboring particles.
Each particle has its own Poisson clock and, when activated, chooses a random adjacent lattice node.
If that node is unoccupied, the particle moves there with probability $\lambda^{-n}$, where $n$ is the number of current neighbors of the particle, for bias parameter $\lambda > 0$.
Thus, rather than biasing particles towards nodes with more neighbors, we instead discourage moves away from nodes with more neighbors, with larger $\lambda$ corresponding to a stronger ferromagnetic attraction between particles (\figtext~\ref{fig:SOPS}A).
This new chain $\mathcal{M}_\text{AGG}$ converges to the same Boltzmann distribution $\pi(\sigma) \propto \lambda^{E(\sigma)}$ over particle system configurations $\sigma$ as the original SOPS algorithm.
Details of the proofs can be found in the Materials and Methods.

\begin{SCfigure}[][th]
    \centering
    \includegraphics[width=0.65\textwidth]{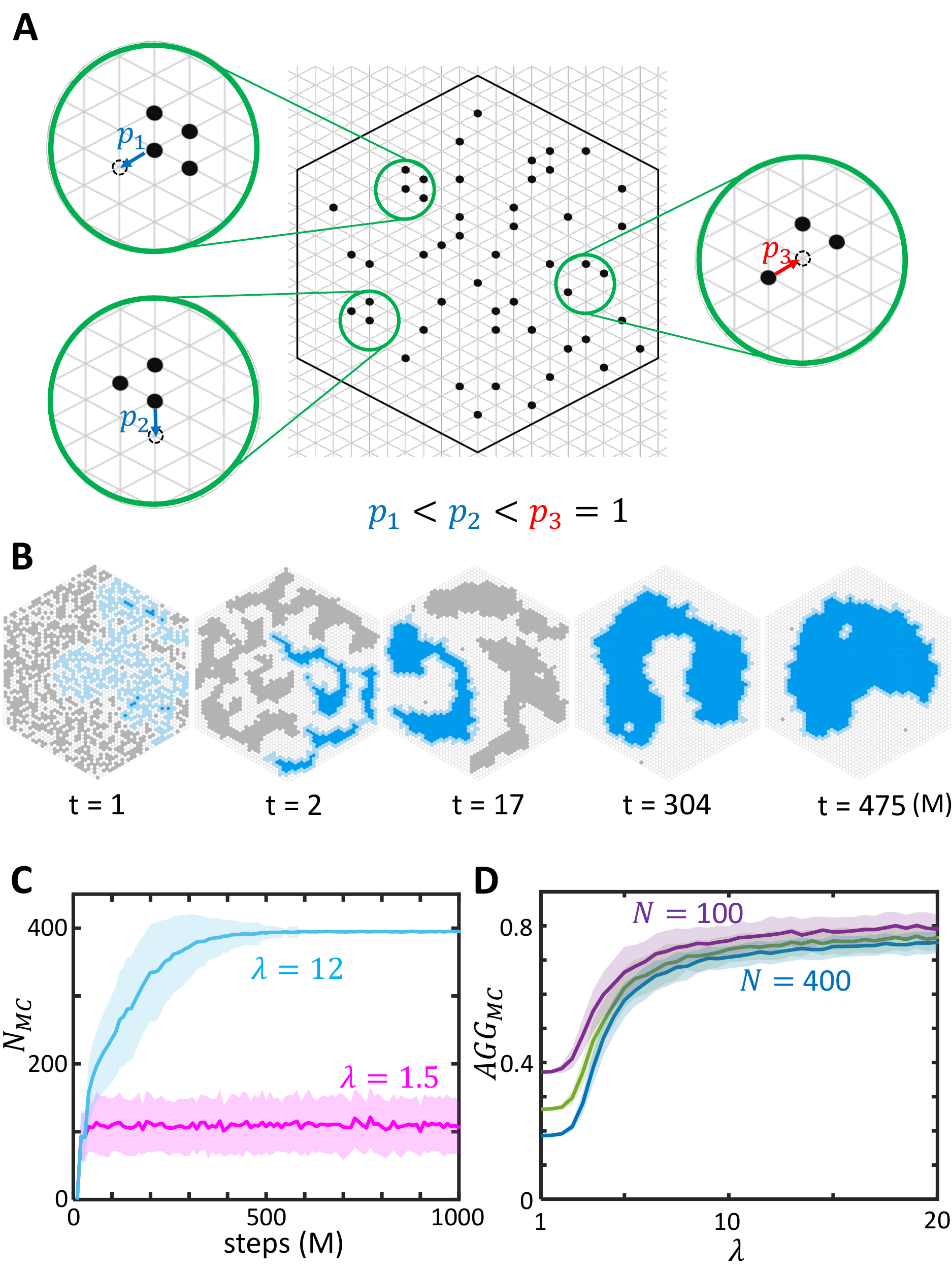}
    \caption{\textbf{The theoretical self-organizing particle system (SOPS).}
    (\textbf{A}) A particle moves away from a node where it has $n$ neighbors with probability ${\lambda^{-n}}$, where $\lambda > 0$.
    Thus, moves from locations with more neighbors are made with smaller probability than those with fewer (e.g., in the insets, $p_1 = \lambda^{-3} < p_2 = \lambda^{-2} < p_3 = 1$).
    (\textbf{B}) Time evolution of a simulated SOPS with 1377 particles for $\lambda = 7.5$ showing progressive aggregation (Movie S1).
    The bulk of the largest connected component is shown in blue and its periphery is shown in light blue.
    (\textbf{C}) Time evolution of $N_{MC}$, the size of the largest connected component, showing dispersion for $\lambda = 1.5$ and aggregation for $\lambda = 12$.
    The simulations use $400$ particles.
    (\textbf{D}) Phase change in $\lambda$-space for the aggregation metric $AGG_{MC} = N_{MC} / (k_0P_{MC}\sqrt{N})$, where $k_0$ is a scaling constant, $P_{MC}$ is the number of particles on the periphery of the largest component, and $N$ is the total number of particles.
    This phase change is qualitatively invariant to the system's size.}
    \label{fig:SOPS}
\end{SCfigure}

Let $\Omega$ be the set of configurations with $N$ particles within our bounded lattice region.
We will use the following definition to quantify aggregation for particles that can be disconnected, capturing both the size and compactness of aggregates.
\begin{definition}
	For $\beta > 0$ and $\delta \in (0, 1/2)$, a configuration $\sigma \in \Omega$ is \underline{$(\beta,\delta)$-aggregated} if there is a subset $R$ of lattice nodes such that:
	\begin{enumerate}
		\item At most $\beta \sqrt{N}$ edges have exactly one endpoint in $R$;
		\item The density of particles in $R$ is at least $1-\delta$; and
		\item The density of particles not in $R$ is at most $\delta$.
	\end{enumerate}
\end{definition}
\noindent Here, $\beta$ is a measure of how  small the boundary between $R$ and its complement $\overline{R}$ must be, measuring the compactness of the aggregated particles, and $\delta$ is a tolerance for having unoccupied nodes within the cluster $R$ or occupied nodes outside of $R$.
We say that a configuration is {\it dispersed} if no such $(\beta, \delta)$ exist.

By carefully analyzing the stationary distribution of $\cal{M}_{\rm AGG}$, which is just the desired Boltzmann distribution, we establish conditions that provably yield aggregation when the particles are confined to a compact region of the triangular lattice (\figtext~\ref{fig:SOPS}B).
The proof uses arguments from \cite{Cannon2019}; see the Materials and Methods for details.

\begin{theorem} \label{thm:aggregation}
    Let configuration $\sigma$ be drawn from the stationary distribution of $\cal{M}_{\rm AGG}$ on a bounded, compact region of the triangular lattice, when the number of particles $N$ is sufficiently large.
    If $\lambda > 5.66$, then with high probability there exist $\beta > 0$ and $0 < \delta < 1/2$ such that $\sigma$ will be $(\beta, \delta)$-aggregated.
    However, when $0.98 < \lambda < 1.02$, the configuration $\sigma$ will be dispersed with high probability.
\end{theorem}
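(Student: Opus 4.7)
The plan is to reduce Theorem~\ref{thm:aggregation} to two statements about the stationary distribution $\pi(\sigma) \propto \lambda^{E(\sigma)}$, which (as the excerpt notes) is what the Metropolis--Hastings design of $\mathcal{M}_{\rm AGG}$ produces via detailed balance. A useful reparametrization is by the perimeter $P(\sigma)$, the number of lattice edges with exactly one occupied endpoint: since each particle has triangular-lattice degree $6$, summing degrees over particles gives $2E(\sigma) + P(\sigma) = 6N$, so $\pi(\sigma) \propto \lambda^{-P(\sigma)/2}$. Both halves of the theorem then become statements about the distribution of $P(\sigma)$ under $\pi$.

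For the aggregation direction ($\lambda > 5.66$), I would adapt the Peierls-style contour argument of Cannon et al.~\cite{Cannon2019}: the boundary between the occupied and unoccupied regions decomposes into closed contours on the dual lattice, and standard counting (via an appropriate self-avoiding-walk connective constant $\mu$) shows the number of configurations with perimeter $P$ is at most $\mu^{P}$. Paired against the $\lambda^{-P/2}$ weight, this forces $\pi(P(\sigma) \ge \beta\sqrt{N})$ to be exponentially small in $\sqrt{N}$ once $\sqrt{\lambda} > \mu$, from which the explicit threshold $5.66$ emerges with the sharpest version of $\mu$ used in \cite{Cannon2019}. Conditional on $P(\sigma) = O(\sqrt{N})$, an isoperimetric pigeonhole step extracts a single macroscopic connected component of occupied vertices that contains all but an $o(1)$ fraction of particles; taking $R$ to be this component, with small internal voids filled in, then verifies the three conditions of the $(\beta,\delta)$-aggregation definition, using $\delta$ to absorb stray isolated particles and residual holes that the contour argument cannot eliminate individually.

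For the dispersion direction ($0.98 < \lambda < 1.02$), I would start at $\lambda = 1$, where $\pi$ is uniform over $N$-particle subsets of the bounded region. A uniformly random such subset at constant density has perimeter $\Theta(N)$ with exponential concentration (e.g., by a bounded-differences argument on indicator coordinates), and the number of candidate sets $R$ of lattice nodes with boundary at most $\beta\sqrt{N}$ is only $\exp(O(\sqrt{N}))$ by the same contour counting; a union bound then rules out any such $R$ simultaneously witnessing the density conditions~2 and~3, with probability exponentially close to $1$. To extend from $\lambda = 1$ to the stated window, I would analyze $\log Z_\lambda = \log \sum_\sigma \lambda^{E(\sigma)}$ directly, showing that for $|\lambda - 1|$ small the saddle point of the sum over $P$ remains at $P = \Theta(N)$; the narrow $0.02$ window is calibrated so that entropy of typical high-perimeter configurations still dominates the energetic bias toward compact ones.

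The main obstacle is the aggregation half: making the contour bound quantitative enough to hit the explicit $5.66$ threshold rather than some cruder constant, and handling contours that meet the boundary of the bounded triangular region (typically absorbed as a fixed $O(\sqrt{N})$ cost that only affects $\beta$). A secondary challenge is that a perimeter bound alone is not yet $(\beta,\delta)$-aggregation; promoting it requires choosing $R$ so that all three conditions hold simultaneously, which is precisely where the tolerance $\delta$ in the definition is essential. For the dispersion half, the tightness of the $(0.98, 1.02)$ window shows that a naive Radon--Nikodym comparison to $\lambda = 1$ (which would only give $\exp(O(|\lambda-1|N))$ perturbation) is insufficient, so the partition-function analysis must be done with some care.
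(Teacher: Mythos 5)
Your identification of the stationary distribution $\pi(\sigma)\propto\lambda^{E(\sigma)}\propto\lambda^{-P(\sigma)/2}$ and the overall Peierls-style strategy are on target, but the aggregation half as you have set it up cannot reach the stated threshold. Pairing the weight $\lambda^{-P/2}$ against a contour count of order $\mu^{P}$ forces the condition $\sqrt{\lambda}>\mu$, i.e.\ $\lambda>\mu^{2}$; since the connective constant of the triangular lattice exceeds $4$, this route proves aggregation only for $\lambda$ on the order of $17$ or more, not $\lambda>5.66$. No ``sharpest version of $\mu$'' rescues this, because $\sqrt{5.66}\approx 2.38$ lies below every rigorous lower bound on $\mu$. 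The paper (following Cannon et al.\ and the technique of Miracle et al.) avoids raw contour enumeration entirely: it constructs an explicit map sending each dispersed configuration to an aggregated one by systematically removing scattered particles and reassembling them compactly in a chosen location, then bounds the number of preimages of each image configuration and shows the map multiplies the stationary weight by enough to beat that preimage count, yielding total dispersed mass at most $(c_1/\lambda)^{c_2\sqrt{N}}$. The constant $5.66$ is an artifact of that preimage bookkeeping, not of a connective constant. Your isoperimetric extraction of a single macroscopic component from a perimeter bound is fine, but it sits downstream of a threshold your counting cannot deliver.

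You also skip a structural move the paper relies on: aggregation/dispersion in a bounded region is reduced to the separation/integration results of Cannon et al.\ (2019) by treating empty sites as particles of a second color in a fully packed system, so that particle moves become swaps and separation corresponds to aggregation. For the dispersion half your outline is workable but vaguer than the paper's: rather than perturbing off $\lambda=1$ via a saddle-point analysis of $\log Z_\lambda$, the paper exhibits polynomially many canonical events (regularly shaped subregions being almost entirely occupied), at least one of which must occur in any aggregated configuration, and applies a Chernoff-type bound to each directly at the given $\lambda$; the union bound is then over polynomially many exponentially small terms, and the $(0.98,1.02)$ window is exactly where the Chernoff exponent survives the tilt. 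Your $\exp(O(\sqrt{N}))$ union bound over candidate sets $R$ would also suffice against an $\exp(-cN)$ event probability, but the step extending the $\lambda=1$ concentration to the whole window is the part you would actually have to make rigorous, and as written it is only a plan.
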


Varying values of $\lambda$ in simulation gives strong indication that dispersion persists for larger values of $\lambda$ and the aggregation algorithm undergoes a phase transition whereby the macroscopic behavior of the system suddenly changes from dispersion to aggregation (\figtext~\ref{fig:SOPS}C--D, Movie S1), mimicking the fixed magnetization ferromagnetic Ising model which motivated our Markov chain algorithm.
Nonetheless, our proofs demonstrate that our system has two distinct phases of behavior for different ranges of $\lambda$ for any system with a sufficiently large number of interacting particles, which is enough for our purposes.

\subsection*{BOBbots: a model active cohesive granular matter system}

Next, to test whether the lattice-based equilibrium system can be used to control a real-world swarm in which there are no guarantees of detailed balance or Boltzmann distributions, we introduce a collective of active cohesive granular robots which we name \textit{BOBbots} (\figtext~\ref{fig:bobbots}A--C,~\ref{fig:bobbotdesign}) --- Behaving, Organizing, Buzzing robots --- whose design \textit{physically embodies} the aggregation algorithm. 
Driven granular media provide a useful soft matter system to integrate features of the physical world into the toolkit for programming collectives.
This builds upon three decades of work understanding how forced collections of simple particles interacting locally can lead to remarkably complex and diverse phenomena, not only mimicking solids, fluids, and gasses~\cite{andreotti2013granular,lim2019cluster} --- e.g., in pattern formation~\cite{melo1995hexagons,eshuis2007phase}, supercooled and glassy phenomena~\cite{keys2007measurement,goldman2006signatures}, and shock waves~\cite{rericha2001shocks} --- but also displaying phenomena characteristic of soft matter systems such as stress chains~\cite{howell1999stress} and jamming transitions~\cite{corwin2005structural,bi2011jamming}.
While cohesive granular materials are typically generated in situations where particles are small (powders, with interactions dominated by electrostatic or even van der Waals interactions) or wet (with interactions dominated by formation of liquid bridges between particles)~\cite{Mitarai2006-wetgranular,Hemmerle2016-cohesivegranular}, we generate our cohesive granular robots using loose magnets which can rotate to always achieve attraction.

\begin{SCfigure}[][t]
    \centering
    \includegraphics[width=0.65\linewidth]{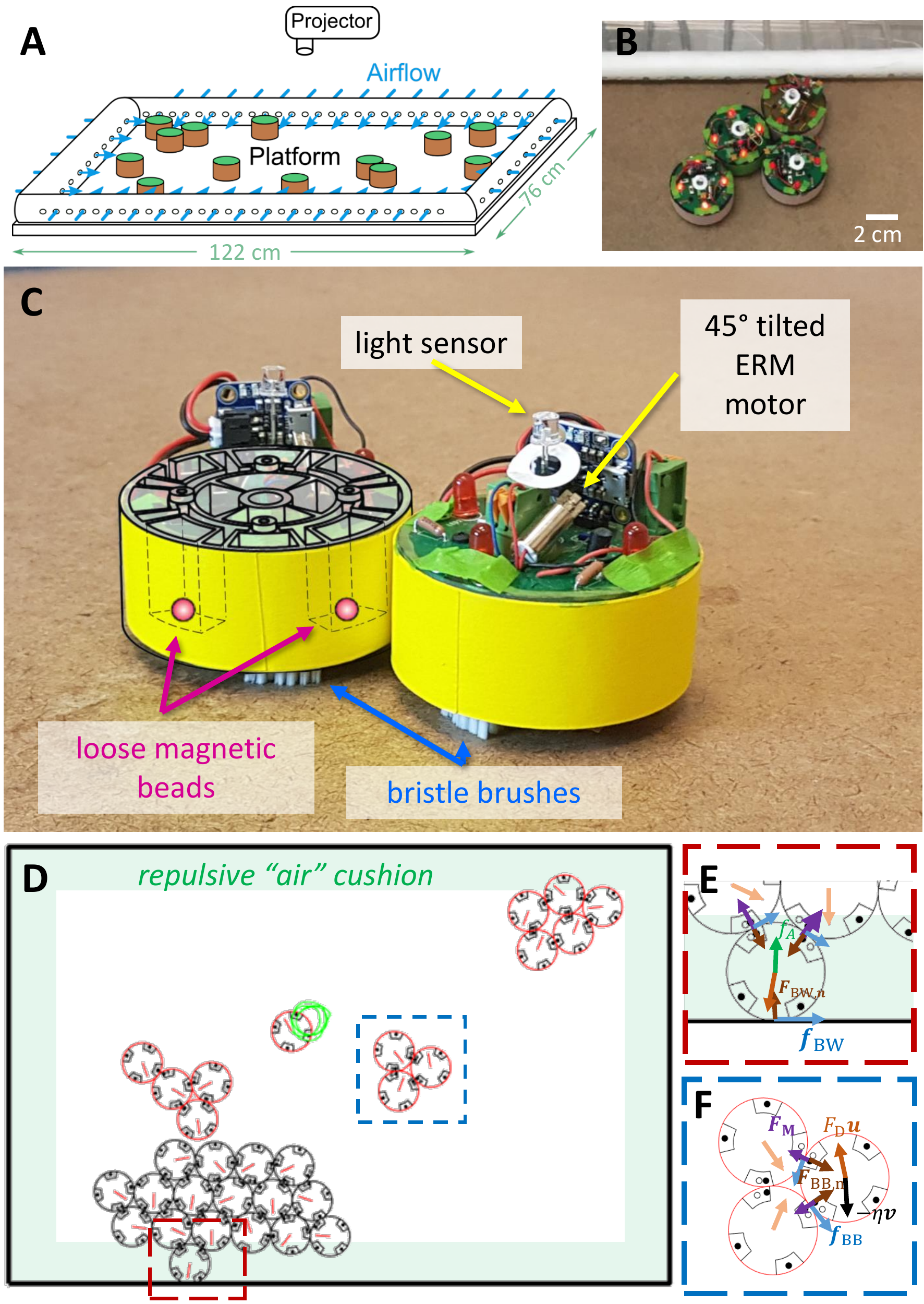}
    \caption{\textbf{BOBbots and their collective motion.}
    (\textbf{A}) Schematic of experimental setup. BOBbots are placed in a level arena with airflow gently repelling them from the boundaries.
    (\textbf{B}) A closeup of the experimental platform.
    (\textbf{C}) Mechanics of the BOBbots.
    Loose magnetic beads housed in the BOBbots' peripheries can reorient so BOBbots always attract each other.
    The vibration of the ERM motor and the asymmetry of bristles lead to the directed motion.
    The light sensor activates the motion.
    (\textbf{D}) Discrete element method simulation setup.
    (\textbf{E}) The BOBbot-boundary interactions: airflow repulsion $f_A$, BOBbot-boundary friction $f_{\text{BW}}$, and normal force $F_{\text{BW,n}}$.
    (\textbf{F}) The inter-BOBbot interactions: attraction between magnetic beads $F_M$, inter-BOBbot friction $f_{\text{BB}}$, and sterical exclusion $F_{\text{BB,n}}$.}
    \label{fig:bobbots}
\end{SCfigure}

The movement and interactions between BOBbots were designed to capture the salient features of the abstract stochastic algorithm while replacing all sensing, communication, and probabilistic computation with physical morphology and interactions.
Each BOBbot has a cylindrical chassis with a base of elastic ``brushes'' that are physically coupled to an off-center eccentric rotating mass vibration motor (ERM).
The vibrations caused by the rotation of the ERM are converted into locomotion by the brushes (\figtext~\ref{fig:bobbots}C).
Due to asymmetry in our construction of this propulsion mechanism, the BOBbots traverse predominantly circular trajectories \cite{kummel2013circular} that are randomized through their initial conditions but --- unlike the SOPS particles --- are inherently deterministic with some noise and occur at a constant speed per robot distributed as $v_0 = 4.8 \pm 2.0$ cm/s.
See the Materials and Methods for further details.

Analogous to the modified transition probabilities in the aggregation algorithm that discourage particles from moving away from positions where they have many neighbors, each BOBbot has loose magnets housed in shells around its periphery that always reorient to be attractive to nearby BOBbots (\figtext~\ref{fig:bobbots}C).
The probability that a BOBbot detaches from its neighbors is negatively correlated with the attractive force from the number of engaged magnets, approximating the movement probabilities given by the algorithm which scale inversely and geometrically with the number of neighbors.
We subsequently verify this assertion experimentally (see Section S5 of the Supplementary Materials for details).
The strength of the magnets $F_{M0}$ determines whether the system aggregates or disperses in the long run, analogous to $\lambda$ in the algorithm.

To allow for study of larger BOBbot ensembles and more comprehensive sweeps of parameter space, we also performed Discrete-Element Method (DEM) simulations of the BOBbots (see \figtext~\ref{fig:bobbots}D--F and the Materials and Methods for more details).
The motion of an individual BOBbot is modeled as a set of overdamped Langevin-type equations governing both its translation and rotation subject to its diffusion, drift~\cite{Jahanshahi2017}, magnetic attraction, and sterical exclusion with other BOBbots.
The translational drift corresponds to the speed from the equilibrium of the drive and drag forces while the rotational drift corresponds to the circular rotation.
Similar methods have been used to understand macroscale phenomena emerging from collectives of microscopic elements~\cite{Hamann2018-swarmrobotics} and to model particle motion in active matter~\cite{Ramaswamy2017}.

Mitigating the effects of the arena's fixed boundaries in both experiments and simulations presented a significant design challenge.
BOBbots can persist along the boundary or in corners, affecting system dynamics by, for example, enabling aggregates to form where they would not have otherwise or hindering multiple aggregates from integrating.
To address these issues, uniform airflow was employed to gently repel BOBbots away from the boundary and similar effects were implemented in simulation.
More details about the experimental apparatus and protocol can be found in the Materials and Methods.

\subsection*{Clustering dynamics explained by algorithm analysis}

Since the critical elements of the SOPS algorithm can be physically embodied by robots as simple as our BOBbots, to test if the SOPS model could quantitatively capture collective dynamics, we next investigated the degree to which collectives of BOBbots aggregate as a function of their peripheral magnet strength $F_{M0}$ in both robotic experiments and DEM simulations.
(For convenience, $F_{M0}$ is normalized by the gravity of Earth $g = 9.81$ m/s$^2$ when using the unit of gram.)
The experimental protocol begins with placing magnets of a particular strength $F_{M0}$ into the BOBbots' peripheral slots.
The BOBbots are positioned and oriented randomly in a rectangular arena and are then actuated uniformly for a fixed time during which the BOBbots' positions and the size of the largest connected component are tracked (\figtext~\ref{fig:clustering}A--C).
These trials are conducted for several $F_{M0}$ values with repetition.
We followed the same protocol in simulations.

\begin{SCfigure}[][th]
    \centering
    \includegraphics[width=0.57\linewidth]{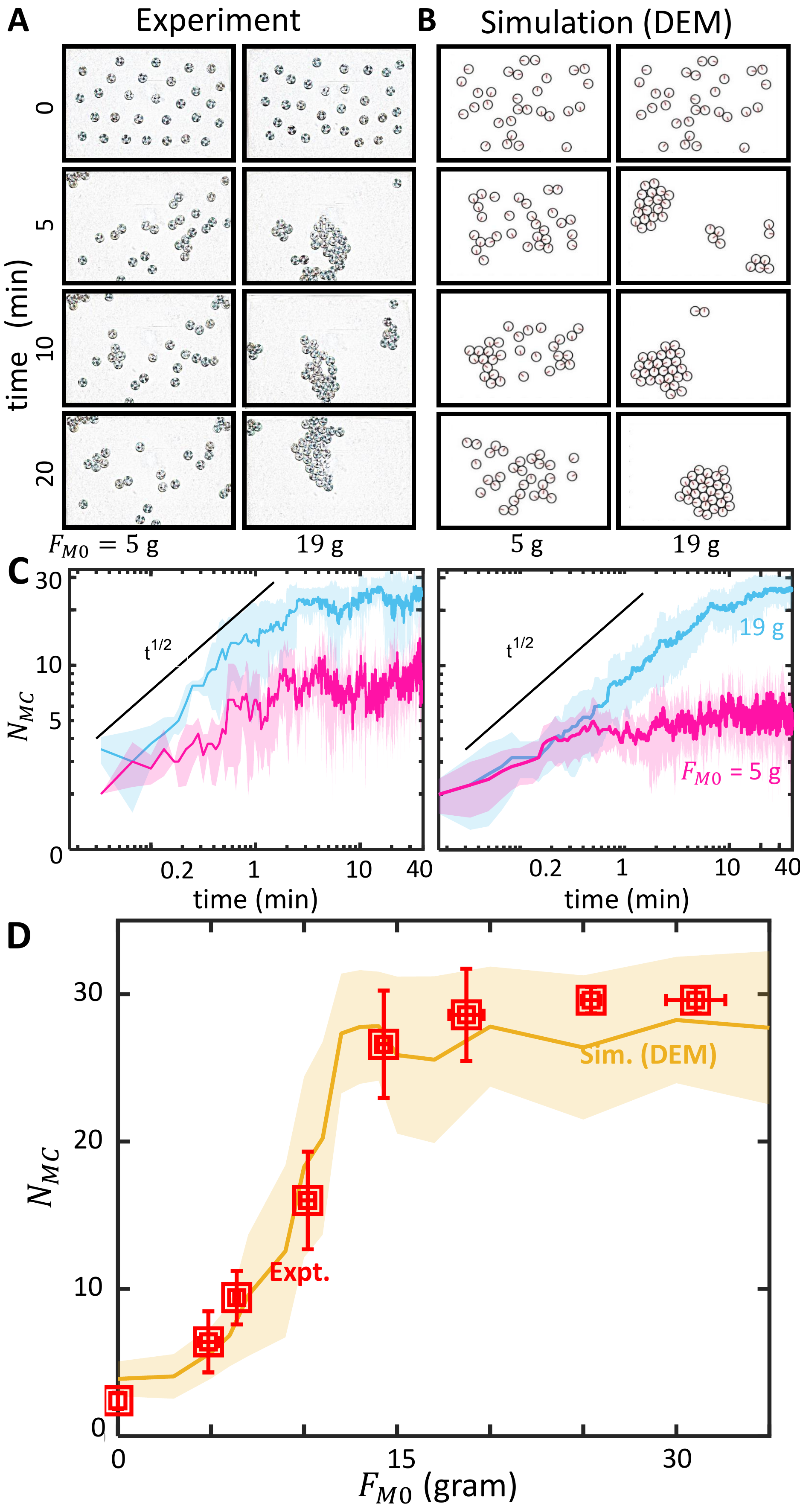}
    \caption{\textbf{Evolution of BOBbot clusters.}
    (\textbf{A}) Time evolution snapshots of both experiment (Movie S3) and (\textbf{B}) simulation (Movie S4) for a system of 30 BOBbots with different magnet strengths: $F_{M0} = 5$ g (left) where we observe dispersion, and $F_{M0} = 19$ g (right) where we observe aggregation.
    Experimental images have been processed with a low-pass filter for better visual clarity.
    (\textbf{C}) Time evolutions of the size of the largest component $N_{MC}$ in experiment and simulation for a system of 30 BOBbots with $F_{M0} = 5$ g (magenta) and $F_{M0} = 19$ g (blue).
    (\textbf{D}) Scaling of cluster size vs.\ magnetic strength for a system of 30 BOBbots showing an increase in $N_{MC}$ as the magnet strength $F_{M0}$ increases.
    The yellow plot line shows the mean and standard deviation of $N_{MC}$ in the $150$ simulation runs for each magnetic strength $F_{M0}$ between $1$--$35$ g, with a step size of $1$ g.
    Experimental data is shown in red with error bars showing the standard deviation of largest cluster size $N_{MC}$ and the uncertainty of $F_{M0}$ due to empirical measurement.}
    \label{fig:clustering}
\end{SCfigure}

In experiment and DEM simulation, we observe an abrupt, rapid rise and then saturation in the size $N_{MC}$ of the largest connected component as the magnetic attraction $F_{M0}$ increases (\figtext~\ref{fig:clustering}D).
These curves resemble those in \figtext~\ref{fig:SOPS}D, with the magnetization $F_{M0}$ playing a role analogous to the bias parameter $\lambda$.
Given this correspondence, we explored whether the equilibrium SOPS model could be used to make quantifiable predictions in the robot experiments.
First, we designed a test to examine how force and $\lambda$ scale.
Recall that in the SOPS algorithm, the force acting on each particle is proportional to $\lambda^n$, where $n$ is the particle's current number of neighbors.
In the experiments, BOBbots cannot count their neighbors, but the magnets are expected to provide a similar force that also increases geometrically when more magnets are engaged.

To estimate the relationship between force and $\lambda$, we investigate the rate at which a BOBbot loses or gains neighbors over a fixed amount of time. 
Viewing a BOBbot's completion of half its circular motion as analogous to a particle moving to a new lattice node in the SOPS algorithm and using this time interval to evaluate the transition, simulation data shows that a BOBbot's transition probability from having a higher number of neighbors $n$ to a lower number $n'$ closely follows the algorithm's $P(\sigma, \tau) \propto \min(1, \lambda^{n'-n})$ transition probabilities (\figtext~\ref{fig:alganalysis}A,~\ref{fig:detachprob}).
Further, we evaluated the BOBbots' effective bias parameter $\lambda_{\text{eff}}$ as a function of $F_{M0}$ and found an exponential relation $\lambda_{\text{eff}} = \exp(\beta F_{M0})$, where $\beta$ is a constant representing inverse temperature (\figtext~\ref{fig:alganalysis}B).
The BOBbots' transition probabilities can then be approximated as $P(\sigma, \tau) = \exp(-\beta(\epsilon_n - \epsilon_{n'}))$, where $\beta$ is the inverse temperature of the system and $\epsilon_n = n \cdot F_{M0}$ can be interpreted as the energy contributed by a BOBbot's $n$ neighbors.

\begin{SCfigure}[][th]
    \centering
    \includegraphics[width=0.58\linewidth]{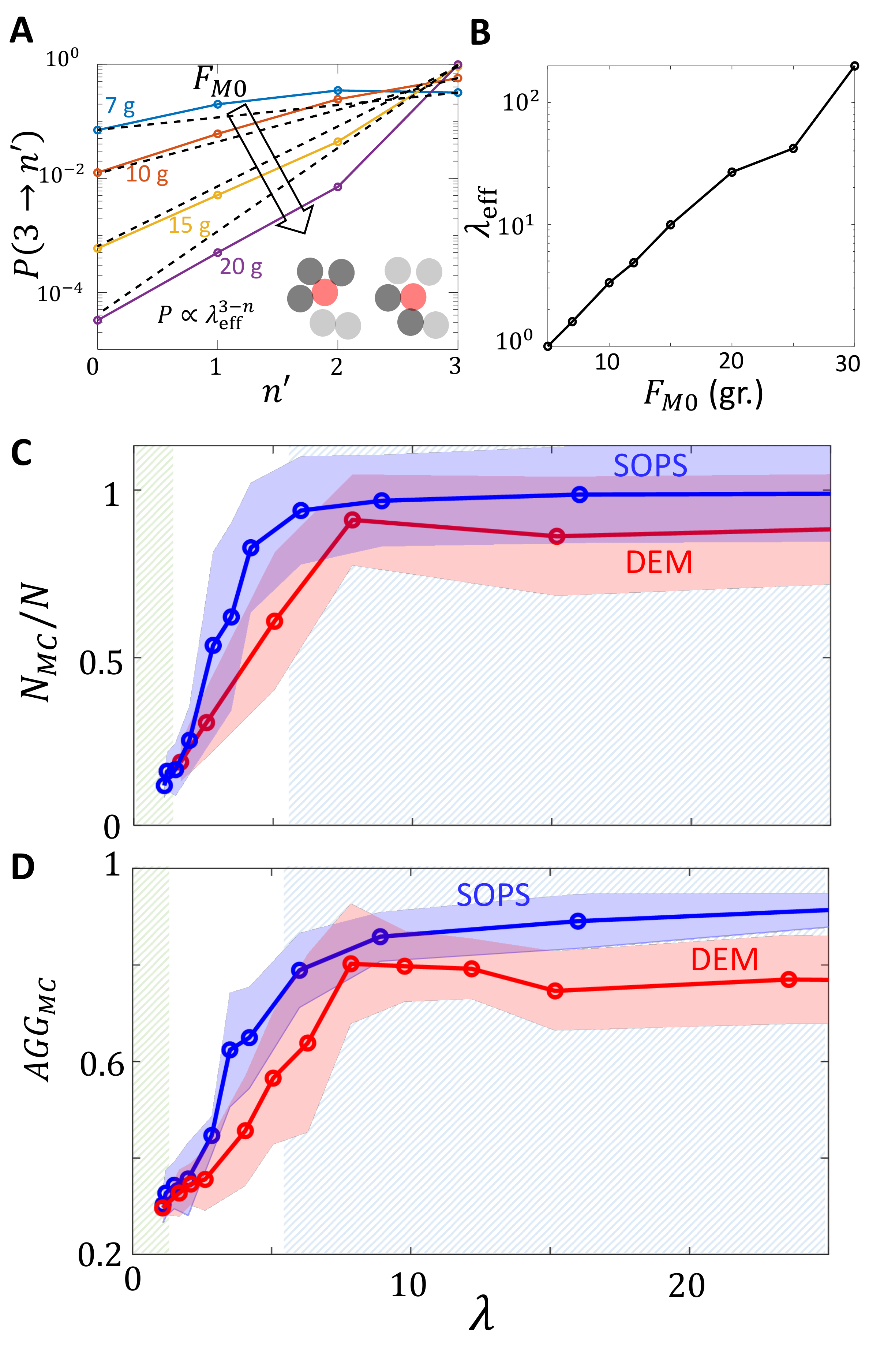}
    \caption{\textbf{Algorithmic interpretation of BOBbot clustering.}
    (\textbf{A}) A diagram showing how the effective bias parameter $\lambda_\text{eff}$ is evaluated from the DEM simulation.
    (\textbf{B}) The dependence of $\lambda_\text{eff}$ on the magnetic attraction force $F_{M0}$.
    The (\textbf{C}) maximum cluster fraction $N_{MC}/N$ and (\textbf{D}) aggregation metric $AGG_{MC}$ for different values of $\lambda$ in both the SOPS algorithm (blue) and physical simulations (red).
    The green and blue shaded regions show the dispersed and aggregated regimes proved from theory, respectively.}
    \label{fig:alganalysis}
\end{SCfigure}

With the relation between $F_{M0}$ and $\lambda_{\text{eff}}$ established, we next compare the aggregation behaviors exhibited by the SOPS algorithm and the BOBbot ensembles.
\figtext~\ref{fig:alganalysis}C shows the fraction of particles/BOBbots in the largest component $N_{MC}/N$ observed in both the SOPS algorithm and BOBbot simulations after converting with respect to $\lambda_{\text{eff}}$; the algorithm does indeed capture the maximum cluster fraction observed in the simulations.
Notably, the aggregated and dispersed regimes in $\lambda$-space established by Theorem~\ref{thm:aggregation} provide a rigorous understanding of these BOBbot collective behaviors.
For instance, the proven dispersed regime $0.98 < \lambda < 1.02$ gives a clear explanation for why agents will not aggregate even in the presence of mutual attraction.
Further, it also helps establish the magnitude of attraction needed to saturate the aggregation.

\begin{SCfigure}[][ht]
    \centering
    \includegraphics[width=0.5\linewidth]{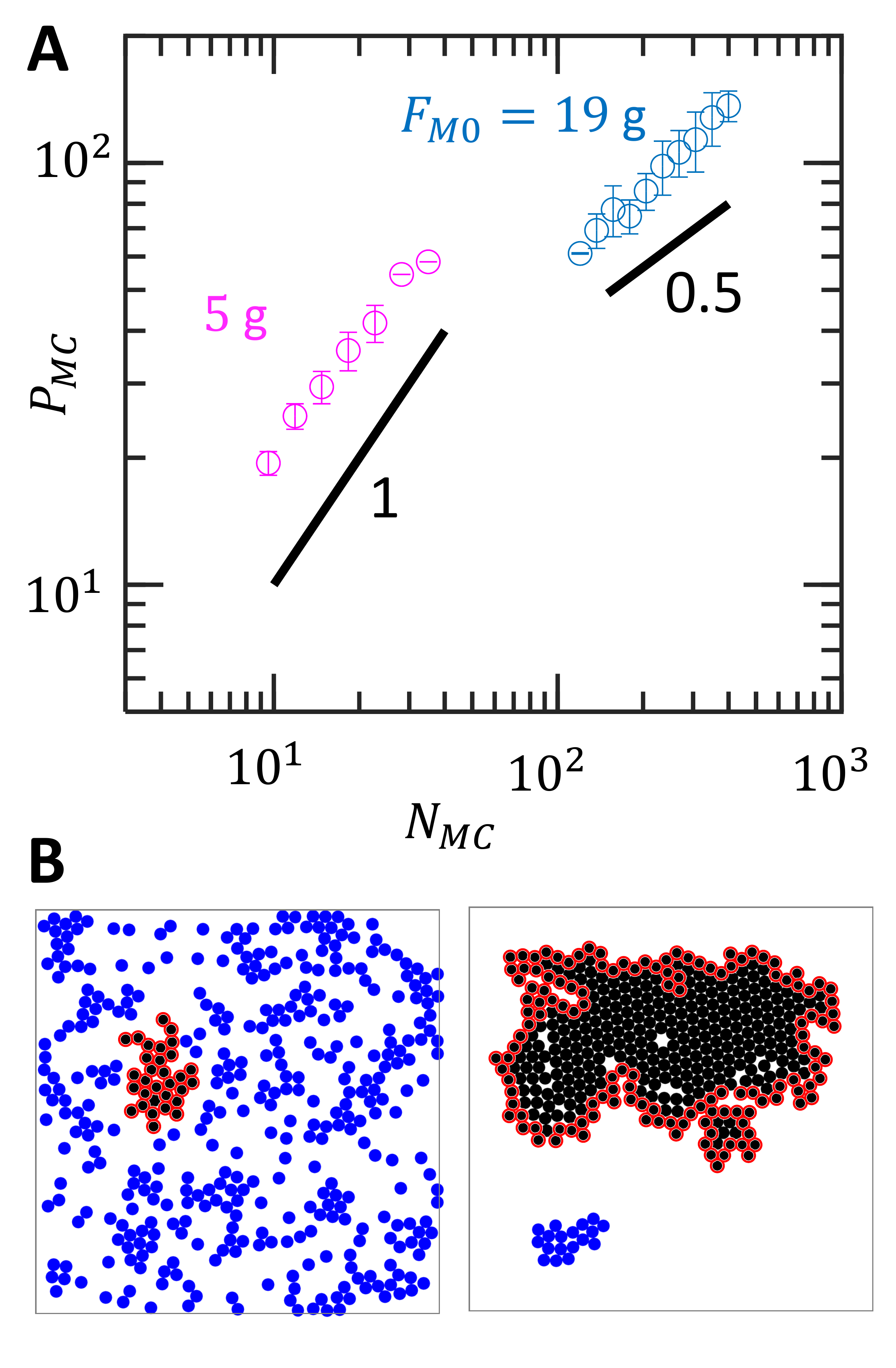}
    \caption{\textbf{Perimeter scaling of BOBbot clusters.}
    (\textbf{A}) Log-log plot showing the scaling relationship between the largest component's size $N_{MC}$ and perimeter $P_{MC}$ in number of BOBbots for simulated systems of 400 BOBbots with $F_{M0} = 5$ g (magenta) and $19$ g (cyan) for fixed boundary conditions.
    Each data point is the average of 20 simulations.
    While the SOPS predicts a scaling power of $0.5$ for the aggregated case (cyan), the data shows a slightly larger --- but still sublinear --- power of $0.66 \pm 0.07$.
    (\textbf{B}) Final snapshot of the collective motion of 400 BOBbots with $F_{M0} = 5$ g (left) and $19$ g (right).
    BOBbots shown in black belong to the largest connected component; those outlined in red are on its periphery.}
    \label{fig:alganalysisB}
\end{SCfigure}

We additionally test the SOPS prediction that the maximum cluster should not only be large but also compact, occupying a densely packed region.
The results from~\cite{Cannon2019} that we apply here for aggregation suggest the following relationship between the size of the largest component $N_{MC}$ and its perimeter $P_{MC}$.
In dispersed configurations, $P_{MC}$ should scale linearly with $N_{MC}$, meaning that most BOBbots lie on the periphery of their components.
In aggregated configurations, however, $P_{MC}$ should scale as $N_{MC}^{1/2}$, approximating the minimal perimeter for the same number of BOBbots by at most a constant factor.
We test these scaling relationships in simulations with 400 BOBbots (\figtext~\ref{fig:alganalysisB}A) and find that the theory's predictions hold in the dispersed regime; however, the $0.66 \pm 0.07$ sublinear scaling power for the aggregated case is slightly higher than the theory's prediction of $0.5$.
This discrepancy may in part be due to boundary and finite-size effects --- in fact, DEM simulations with periodic boundaries show a scaling power of $0.59 \pm 0.18$ that is closer to the SOPS theory (\figtext~\ref{fig:thermoDynLim}) --- but is also affected by non-reversibility inherent in the BOBbots' circular trajectories.
To make a quantitative comparison that captures when components are both large and compact, we track $AGG_{MC} = N_{MC} / (k_0 P_{MC} \sqrt{N})$, where $k_0$ is a scaling constant defined such that $AGG_{MC} = 1$ when the system is optimally aggregated, achieving the minimum possible perimeter.
Physically, $AGG_{MC}$ is reminiscent of surface tension for which energy minimization leads to a smaller interface (in our setting, smaller perimeter $P_{MC}$), yielding an $AGG_{MC}$ closer to 1.
We obtain agreement between the SOPS and DEM simulations with respect to this metric as well (\figtext~\ref{fig:alganalysis}D), further validating the theory's prediction, though the DEM simulations yield slightly smaller $AGG_{MC}$ than the SOPS algorithm for large $\lambda$.

We noticed the size of the largest component $N_{MC}$ grows roughly proportional to $t^{1/2}$ over time (\figtext~\ref{fig:clustering}C).
Since the perimeter of the largest cluster $P_{MC}$ scales proportional to $N_{MC}^{0.66}\approx N_{MC}^{2/3}$ (\figtext~\ref{fig:alganalysisB}A), this implies the length scale grows like $t^{1/3}$.
This is reminiscent of coarsening in a broad class of systems described by Cahn--Hilliard equation $\partial u/ \partial t = \nabla^2 (\Phi'(u)-\gamma \nabla^2 u)$ where order parameter $u$ takes continuous values in $(-1, 1)$ where $-1$ and $1$ are analogous to empty and occupied nodes in the SOPS lattice, respectively.
To bridge the SOPS algorithm with the Cahn--Hilliard equation, we first observe that the SOPS algorithm with bias parameter $\lambda$ can be exactly mapped to an Ising model with fixed magnetization~\cite{jerrum1993polynomial,randall1999sampling} with coupling strength $J = \frac{1}{2\beta}\log\lambda$, where $\beta$ is inverse temperature (see Section S7 of the Supplementary Materials for details).
As shown by Penrose~\cite{penrose1991mean}, the fixed magnetization Ising model with coupling strength $J$ can be mapped to the surface tension $\gamma$ of the Cahn--Hilliard equation as $\gamma = \beta J$.
Thus, the SOPS and BOBbot ensemble behaviors map to the Cahn--Hilliard equation with $\gamma = \frac{1}{2}\log\lambda \propto F_{M0}$.
This suggests that in the limit, the SOPS and BOBbot aggregation behavior should display a second-order phase transition at a critical $\lambda_c$ corresponding to the critical surface tension $\gamma$ in the Cahn--Hilliard equation.
The corresponding critical value $\lambda_c = e^{2/7} \approx 1.33$ on the hexagonal lattice lies within the $\lambda_c \in (1.02, 5.66)$ range proven by the SOPS theory (see Section S8 of the Supplementary Materials for details).
Thus, we obtain agreement between the SOPS theory for a finite lattice system and the Cahn--Hilliard equation for an active matter system at the continuum limit.
This mapping gives further confirmation of the universality of our results and provides another perspective for ``programming'' active collectives.

\begin{figure}[th]
    \centering
    \includegraphics[width=0.6\linewidth]{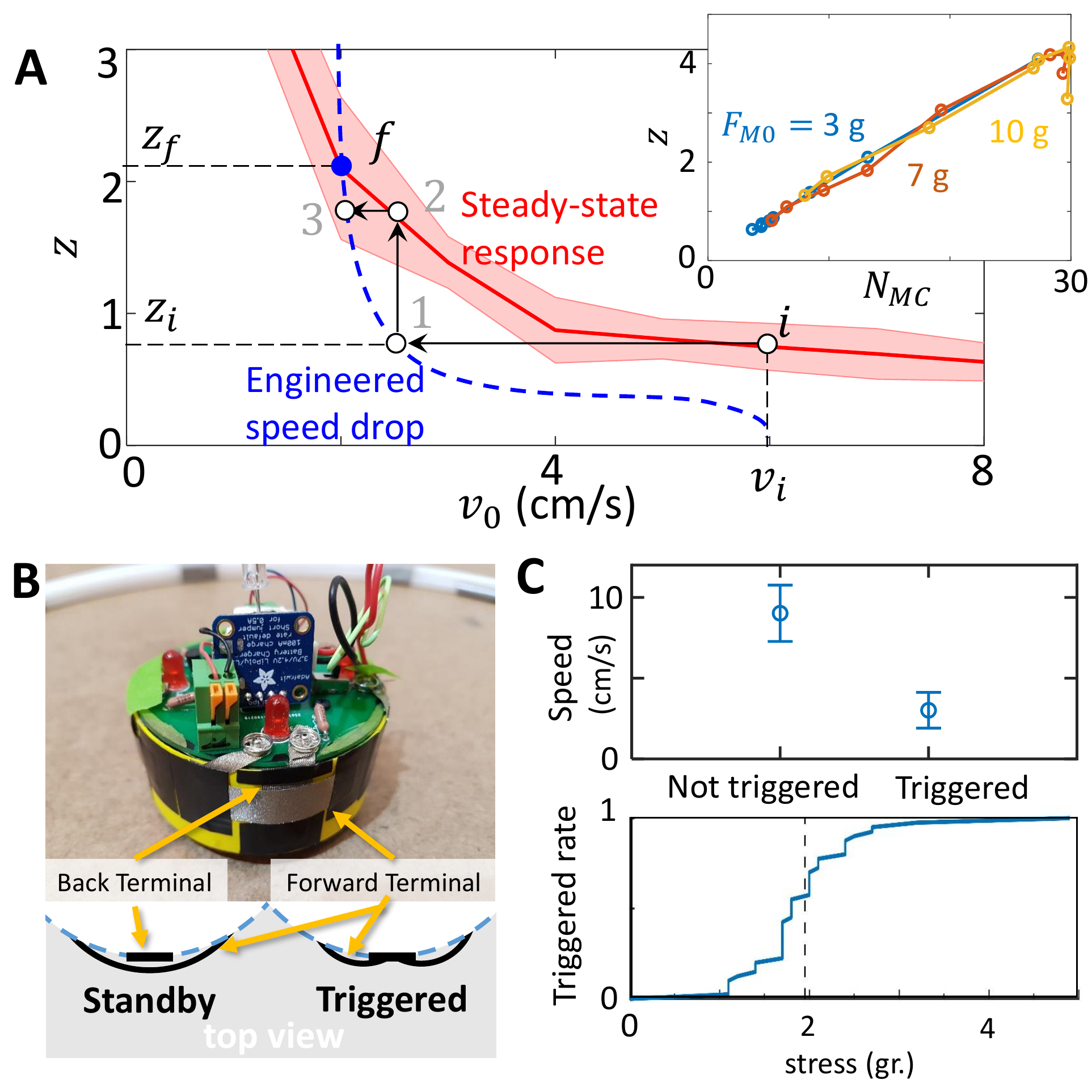}
    \caption{\textbf{Design and implementation of stress sensing for enhanced aggregation.}
    (\textbf{A}) The effect of the engineered, adaptive speeds (blue) on the steady-state average number of neighbors per BOBbot (red) for $F_{M0} = 3$ g.
    Without adapting speeds, BOBbots actuated at a given speed $v_i$ would obtain an average of $z_i$ neighbors per BOBbot at equilibrium (initial point $i$).
    With the adaptive speeds, an average of $z_i$ neighbors per BOBbot causes the average speed to slow ($i \to 1$) which in turn enables convergence to the steady-state response with more neighbors per BOBbot ($1 \to 2$).
    This feedback iterates until the steady-state and engineered responses coincide at final point $f = (v_f, z_f)$, where $v_f < v_i$ and $z_f > z_i$.
    Inset: The mapping between maximum cluster size $N_{MC}$ and the average number of neighbors per BOBbot $z$ indicates the stress-sensing control strategy will increase component sizes.
    (\textbf{B}) A BOBbot equipped with a stress sensor and the schematic top-view sketch of the triggered and not-triggered states.
    (\textbf{C}) The BOBbot's response to stress.
    Top: the speed of a BOBbot when its sensor is and is not triggered.
    Bottom: The rate of sensor triggering as function of the stress applied.}
    \label{fig:stress}
\end{figure}

\subsection*{Enhancing clustering via local stress sensing}

We have demonstrated that the BOBbot ensembles mimic a lattice model that can provably aggregate for large enough $\lambda$, corresponding physically to highly attractive interaction that favors large components with small perimeter.
We now ask whether we can achieve rudimentary collective intelligence determining, for example, how robots could tune their responses to enhance or dampen aggregation, thereby achieving a more tightly clustered or dispersed state.
In particular, we explore whether such tuning can help counteract some ways the system deviates from the theory, such as variations in the BOBbots' speeds and magnetic attraction, improving the fidelity to the original algorithm.
While the BOBbots remain unable to count neighbors or estimate the Gibbs probabilities directly as prescribed by the algorithm, we take advantage of physical effects of the BOBbot ensembles to ``program'' desirable behavior without using any traditional computation.

The first effect relies on observations that for a fixed magnet strength, the size of the largest component $N_{MC}$ decreases with increasing BOBbot speed $v_0$ (\figtext~\ref{fig:NMCvsV}); a full investigation of the behavior of BOBbot collectives at varying uniform speeds will be the subject of a separate study.
We further observe that $N_{MC}$ scales linearly with $z$, the average number of neighbors per BOBbot at equilibrium (\figtext~\ref{fig:stress}A, inset).
Thus, BOBbot speed $v_0$ is inversely correlated with the average number of neighbors per BOBbot $z$.
This arises from $v_0$ being a proxy for $\beta^{-1}$ in the effective attraction $\lambda_{\text{eff}}$.
Consequently, we can mimic enhanced aggregation via increased magnet strength by reducing a BOBbot's speed as a function of its number of neighbors.

Without adapting a BOBbot's speed based on its number of neighbors, a BOBbot collective actuated uniformly at a speed $v$ converges to an average of $z_\text{std}(v)$ neighbors per BOBbot at equilibrium (\figtext~\ref{fig:stress}A, red); any point in speed-neighbor space deviating from $z_\text{std}(v)$ is transient.
To enhance aggregation, we engineer reduced speeds $v_\text{eng}(z)$ that a BOBbot with $z$ neighbors should adapt to (\figtext~\ref{fig:stress}A, blue).
These slowed speeds allow the collective to reconverge to a new steady-state with a larger number of average neighbors per BOBbot (\figtext~\ref{fig:stress}A, arrows).
This feedback between the engineered speeds $v_\text{eng}$ and the steady-state average number of neighbors $z_\text{std}$ iterates until reaching the fixed point in speed-neighbor space where the steady-state and engineered behaviors meet as $z = z_\text{std}(v_\text{eng}(z))$.

\begin{SCfigure}[][ht]
    \centering
    \includegraphics[width=0.64\linewidth]{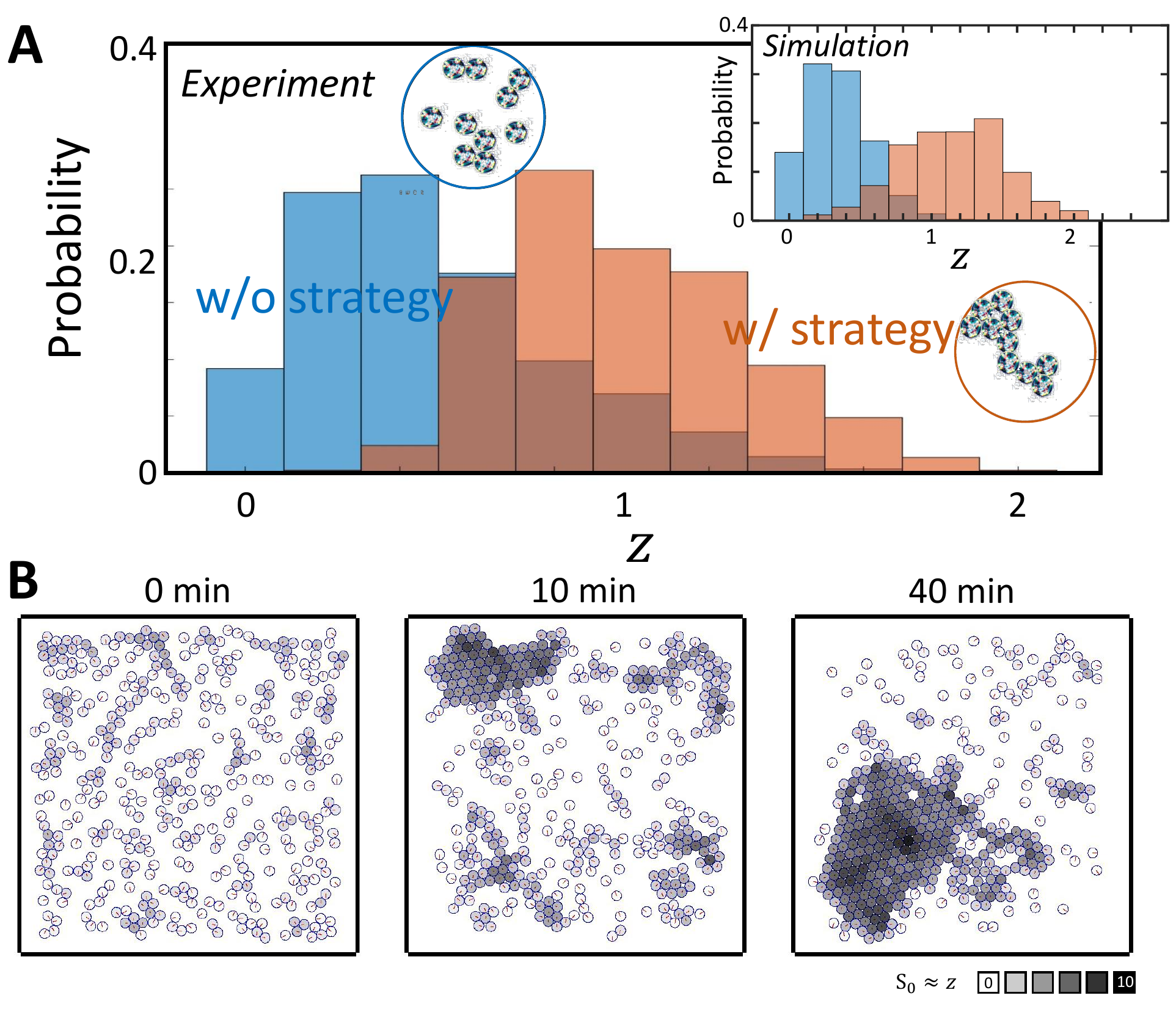}
    \caption{\textbf{Adapting speed via stress sensing enhances aggregation.}
    (\textbf{A}) The distribution of a BOBbot's number of contacts over six 10-minute experiments using $F_{M0} = 3$~g.
    Each sample is an average of number of contacts over 1 second.
    Inset: Simulation results using the same conditions as the experiment.
    (\textbf{B}) A simulation demonstrating enhanced aggregation in an ensemble of 400 BOBbots using a weak magnet strength of $F_{M0} = 7$~g.
    Each BOBbot's speed decreases from $6$ cm/s to $1.2$ cm/s as its stress $s_0 = \sum_{j\in \text{neighbors}} s_j/F_{M0} \approx z$ increases from $0$ to $6$, where $z$ is its current number of neighbors.
    BOBbots in an aggregate's interior experience the most stress (dark gray) and thus have the slowest speeds, enabling larger aggregates to form.
    Without adapting speed in response to stress, the cluster sizes remain the same magnitude as in the 0 minute snapshot (left).}
    \label{fig:stressB}
\end{SCfigure}

While adapting speeds based on numbers of neighbors would be relatively straightforward to implement in more complex robots capable of counting neighbors (e.g., optically as in~\cite{Bayindir2009-aggregation,Rubenstein2014,Valentini2016-collectiveperception,Piranda2018}), implementing such a scheme in the deliberately simple BOBbots is challenging given their lack of such sensing.
Here we utilize a second physical effect: inspired by the correlation of particle density and stress on individual particles in granular systems~\cite{majmudar2007jamming}, we propose that monitoring local contact stress can function as a proxy for counting numbers of neighbors.
An immediate benefit of such a scheme is that it can be implemented on the existing robots via custom, low-cost, analog surface stress sensors (see \figtext~\ref{fig:stress}B and the Materials and Methods for details).
The implemented stress sensors function such that for sufficiently large stress (e.g., when in a cluster), motor speed is decreased by $70\%$ (\figtext~\ref{fig:stress}C).

We implemented this ``physical algorithm'' on BOBbot ensembles with weakly attractive magnets (Movie S6).
In experiments with ensembles of 10 BOBbots in a circular arena, adapting BOBbot speeds in response to stress sensing significantly increases the average number of neighbors per BOBbot (\figtext~\ref{fig:stressB}A).
Further, there is a quantitative match in the final average number of neighbors per BOBbot between the experiments and the fixed points predicted in \figtext~\ref{fig:stress}A, validating our control strategy for enhancing aggregation.
Simulations using the same arena and stress-mediated response reproduce the experimental results (\figtext~\ref{fig:stressB}A, inset).
In simulations of 400 BOBbots with $F_{M0} = 7$~g, we observe that BOBbots with more neighbors experience higher stress and thus have the slower speeds (\figtext~\ref{fig:stressB}B).
This stress-mediated decrease in speed enables large aggregates to form that would not have existed otherwise in the weakly attractive regime.
The use of stress sensing opens an interesting avenue for collectives of rudimentary robots to incorporate higher-order information without complex vision systems; further, contact stress provides insights (e.g., closeness to a jamming transition) that could be valuable in densely packed clusters~\cite{aguilar2018collective}.

\begin{SCfigure}[][th]
    \centering
    \includegraphics[width=0.5\linewidth]{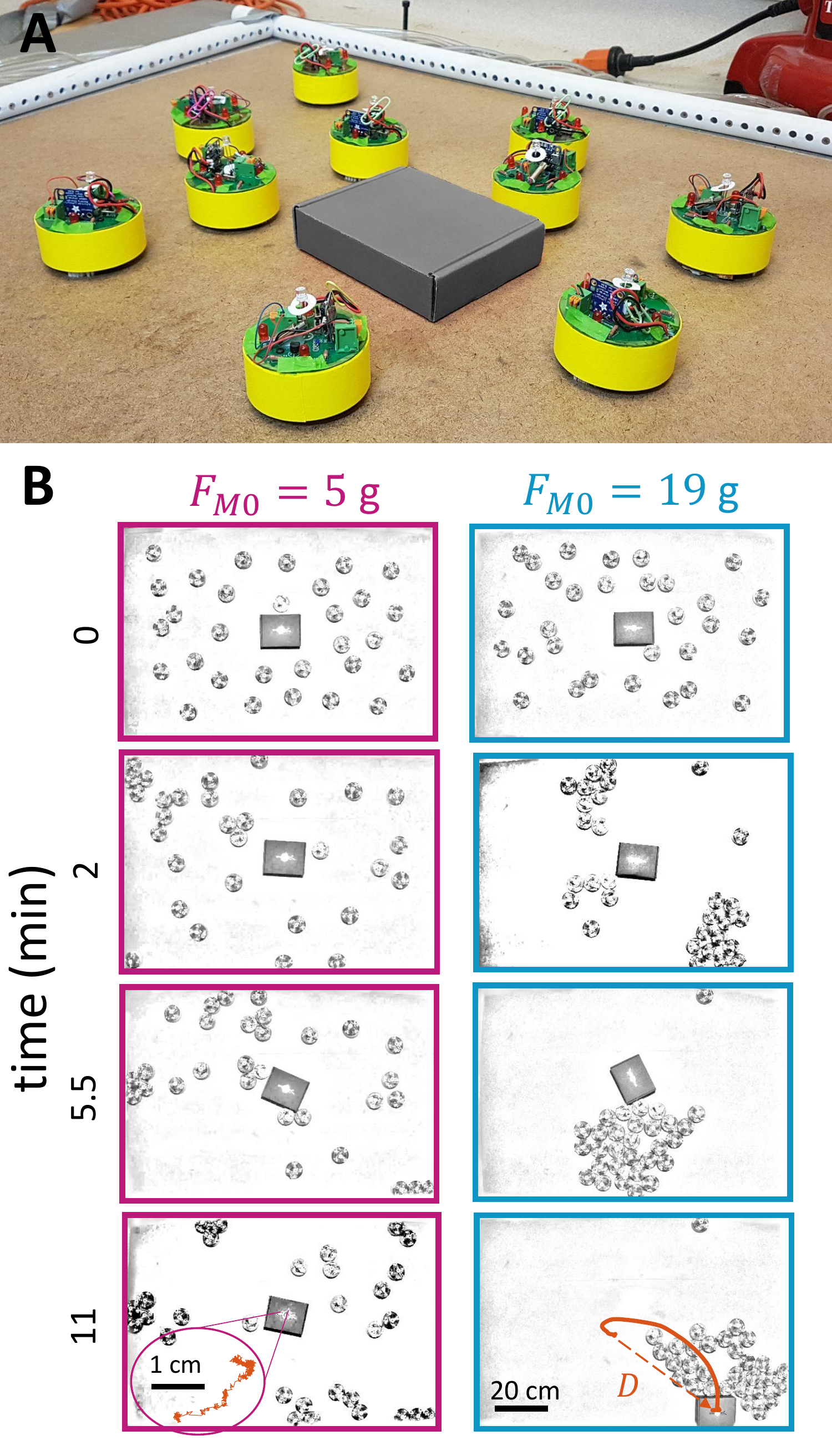}
    \caption{\textbf{Object transport using aggregation.}
    (\textbf{A}) Schematic of the experimental setup.
    (\textbf{B}) Time evolution snapshots of box transport by a system of 30 BOBbots with magnet strength $F_{M0} = 5$ g and $19$ g (Movie S7).
    The box has a mass of $60$ g.
    The final panel shows the object's complete trajectory, where $D$ denotes the Euclidean distance of the final displacement.}
    \label{fig:transport}
\end{SCfigure}

\subsection*{Object transport in the aggregated phase}

Encouraged by the close connections between the physical system and the underlying theoretical model along with the successful control scheme for enhanced aggregation using stress sensing, we sought to test whether aggregated BOBbots could collectively accomplish a task.
In particular, could an aggregated BOBbot collective ``recognize'' the presence of a non-robot impurity in its environment and cooperatively expel it from the system?
Typically, such collective transport tasks --- e.g., the cooperative transport of food by ants~\cite{Wilson2014-stochastictransport,Feinerman2018} --- either manifest from an order-disorder transition or rely heavily on conformism between agents for concerted effort and alignment of forces.
With our BOBbot collectives, we instead aim to accomplish transport via simple mechanics and physical interactions emergently controlling global behavior without any complex control, communication, or computation.

By maintaining a high magnetic attraction $F_{M0}$, we remain in the aggregated regime where most BOBbots connect physically and can cumulatively push against untethered impurities (e.g., a box or disk) introduced in the system (\figtext~\ref{fig:transport}A, Movie S7).
The BOBbot collective's constant stochastic reconfiguration grants it the ability to envelop, grasp, and dislodge impurities as their individual forces additively overcome the impurities' friction, leading to large displacement in the aggregated regime (\figtext~\ref{fig:transport}B, right) with a median displacement of $7.9$ cm over 12 minutes.
On the contrary, we find that systems with weak magnetic attraction (i.e., those in the dispersed regime) can typically only achieve small impurity displacement (\figtext~\ref{fig:transport}B, left) with a median displacement of $0.9$ cm over $12$ minutes (see \figtext~\ref{fig:transporttrajectories} for distributions).
We observe infrequent anomalies in which dispersed collectives achieve larger displacement than aggregated ones, but these outliers arise from idiosyncrasies of our rudimentary robots (e.g., an aggregated cluster of BOBbots may continuously rotate in place without coming in contact with an impurity due to the BOBbots' individual orientations in the aggregate; see Movie S7).

Characterizing the impurity's transport dynamics as mean-squared displacement over time $\langle r^2(\tau)\rangle = v\tau^\alpha$ reveals further disparities between the aggregated and dispsered BOBbot collectives (\figtext~\ref{fig:transportB}A).
On a log-log plot, the intercept indicates $\log(v)$, where $v$ is the characteristic speed of the impurity's transport; we observe that in all but one fringe case the strongly attractive collectives achieve transport that is orders of magnitude faster than those of the weakly attractive ones (\figtext~\ref{fig:transportB}B).
The slope of each trajectory indicates the exponent $\alpha$ that characterizes transport as subdiffusive ($\alpha < 1$), diffusive ($\alpha = 1$), or superdiffusive ($\alpha > 1$).
While all the strongly attractive collectives immediately achieve nearly ballistic transport (with $\alpha = 1.85 \pm 0.11$ for $\tau < 20$ s) indicating rapid onset of cluster formation and pushing, the weakly attractive collectives initially exhibit mostly subdiffusive transport (with $\alpha = 0.89 \pm 0.56$ for $\tau < 20$ s) caused by intermittent collisions from the dispersed BOBbots (\figtext~\ref{fig:transportB}C).
When the slight heterogeneous distribution of the dispersed BOBbots remains unchanged for a sufficiently long time, the accumulation of displacement in a persistent direction can cause a small drift, leading to ballistic transport at a longer time scale.
These results align with the predictions of a simple model combining subdiffusive motion with small drift (\figtext~\ref{fig:transporttheory}).
Nonetheless, the transport speeds achieved by the dispersed collectives are two orders of magnitude smaller than those of the strongly attractive ones.

\begin{SCfigure}[][th]
    \centering
    \includegraphics[width=0.6\linewidth]{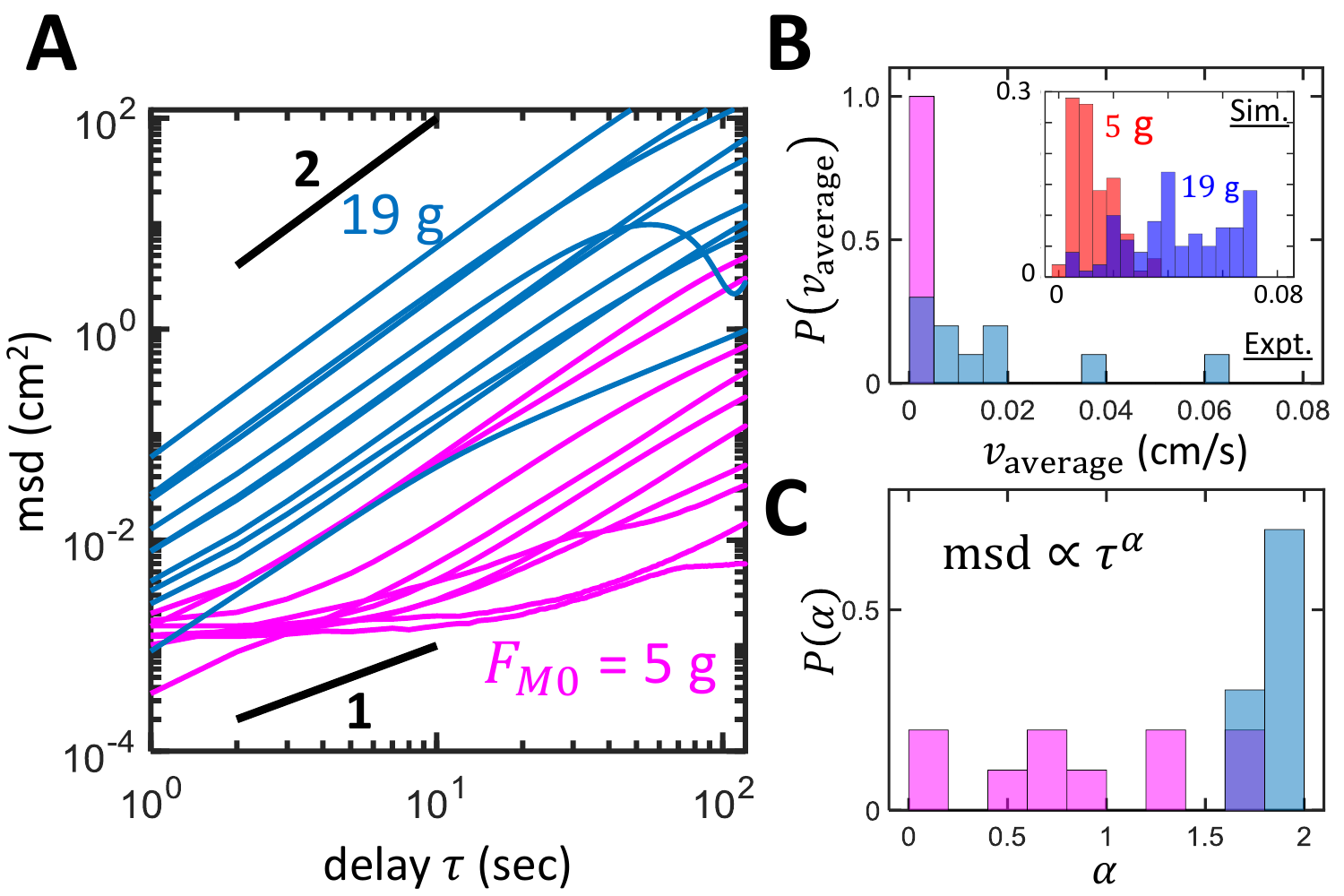}
    \caption{\textbf{Object transport using aggregation.}
    (\textbf{A}) Mean-squared displacement of the box over time in log-log scale for collectives with $F_{M0} = 5$ g (magenta) and $19$ g (blue).
    (\textbf{B}) Distribution of the average speed, calculated as the final displacement $D$ (as shown in Fig.~\ref{fig:transport}B) divided by total time.
    Inset: Simulation results for the overall transport speed.
    The two peaks for $F_{M0} = 19$ g correspond to pushing to the edges and corners.
    (\textbf{C}) Distributions of the mean-squared displacement exponent $\alpha$ at short time scale $\tau < 20$ s.}
    \label{fig:transportB}
\end{SCfigure}

Simulations of impurity transport (see S6 for details) reproduce the experimental results (\figtext~\ref{fig:transportB}B, inset, Movie S7), including the rare anomalies.
Seven of the 100 simulations of weakly attractive collectives succeeded in transporting the impurity to the arena boundary at slow speeds while 76 of the 100 simulations of strongly attractive collectives did so ballistically.
The remaining 24 simulations of attractive collectives that did not achieve ballistic transport consistently formed an aggregate that never came into contact with the impurity.
We found that disaggregating established aggregates by introducing time periods with no attraction enabled them to dissolve and reform for another attempt at transport.
Using different disaggregating sequences, the attractive collectives achieved ballistic transport in 15--20\% more simulations than without disaggregating (\figtext~\ref{fig:shuffling}).
Physically and interestingly, in the Cahn--Hilliard picture, impurity transport can be interpreted as the expulsion of an obstacle in a continuum mixture with sufficiently high surface tension to yield phase separation.
If the obstacle occupies a position that is later occupied by the solid phase, the obstacle is expelled due to sterical exclusion; when its position is unvisited by the solid phase during the process of coarsening, however, it remains stagnant, similar to the anomalies for attractive collectives.
In this interpretation, disaggregating effectively repeats the coarsening process to that the probability any given position is unvisited by the solid phase is significantly diminished.

\section*{Discussion and Conclusion}

In this paper, we use mathematical ideas from distributed computing and statistical physics to create task-oriented cohesive granular media composed of simple interacting robots called BOBbots.
As predicted by the theory, the BOBbots aggregate compactly with stronger magnets (corresponding to large bias parameter $\lambda$) and disperse with weaker magnets (or small $\lambda$).
Simulations capturing the physics governing the BOBbots' motions and interactions further confirm the predicted phase change with larger numbers of BOBbots.
The collective transport task then demonstrates the utility of the aggregation algorithm.

There are several noteworthy aspects of these findings.
First, the \textit{theoretical framework} of the underlying SOPS model can be generalized to allow many types of relaxations to its assumptions, provided its dynamics remain reversible and model a system at thermal equilibrium.
For example, noting that the probability that a robot with $n$ neighbors detaches may not scale precisely as $\lambda^{-n}$ as suggested by the Boltzmann weights, we can generalize the SOPS model to be more sensitive to small variations in these weights: the proofs establishing the two distinct phases can be shown to extend to this setting, provided the probabilities $p_n$ of detaching from $n$ neighbors satisfy $c_1 \lambda^{-n} \leq p_n \leq c_2 \lambda^{-n}$, for constants $c_1, c_2 > 0$.

The \textit{robustness} of the local, stochastic algorithms makes the macro-scale behavior of the collective resistant to many types of idiosyncrasies inherent in the BOBbots, including bias in the directions of their movements, the continuous nature of their trajectories, and nonuniformity in their speeds and magnet strengths.
Moreover, our algorithms are inherently self-stabilizing due to their memoryless, stateless nature, always converging to a desired system configuration --- overcoming faults and other perturbations in the system --- without the need for external intervention.
In our context, the algorithm will naturally continue to aggregate, even as some robots may fail or the environment is perturbed.

We find agreement not only between the BOBbot ensembles and the discrete SOPS model, but also with \textit{continuum models of active matter}.
The SOPS algorithm for aggregation and dispersion was initially defined as a distributed, stochastic implementation of a fixed magnetization Ising model.
In addition to showing that our experimental system follows guarantees established by the analysis of a discrete model, we also observe that the growth of its largest component matches the power-law derived for the Cahn--Hilliard equation, a continuous analog of the Ising model~\cite{penrose1991mean}.
This mapping provides an intuitive understanding of how the SOPS bias parameter $\lambda$, the physical inter-BOBbot attraction $F_{M0}$, and the surface tension $\gamma$ in the Cahn--Hilliard equation correspond; thus, as $\gamma$ controls the phase change in the Cahn--Hilliard equation, so do $\lambda$ and $F_{M0}$ in their respective settings.
This observation buttresses our confidence that the SOPS model provides a useful algorithmic framework capable of producing valid statistical guarantees for ensembles of interacting robots in continuous space.

Moreover, we find that the \textit{nonequilibrium dynamics} of the BOBbots are largely captured by the theoretical models that we analyze at thermal equilibrium, which is in agreement with the findings of Stenhammar et al.~\cite{Stenhammar2013}.
For example, in addition to visually observing the phase change as the magnetic strengths increase, we are able to test precise predictions about the size and perimeter of the largest connected components based on the formal definitions of aggregation and dispersion from the SOPS model.
We additionally use simulations to study the transition probability of a BOBbot from having $n$ neighbors to having $n'$ neighbors to see if the magnetic interactions conform to the theory, and indeed we see a geometric relation decrease in the probability of moving as we increase the number of neighbors, as predicted.
The resultant correspondence between the magnetic attraction and effective bias in the algorithm confirms a quantitative connection between the physical world and the abstract algorithm.

In summary, the framework presented here using provable distributed, stochastic algorithms to inspire the design of robust, simple systems of robots with limited computational capabilities seems quite general.
It also allows one to leverage the extensive amount of work on distributed and stochastic algorithms, and equilibrium models and proofs in guiding the tasks of inherently out of equilibrium robot swarms.
Preliminary results show that we likely can achieve other basic tasks such as alignment, separation (or speciation), and flocking through a similar principled approach.
We note that exploiting physical embodiment with minimal computation seems a critical step in scaling collective behavior to encompass many cutting edge settings, including micro-sized devices that can be used in medical applications and cheap, scalable devices for space and terrestrial exploration.
Additionally, we plan to further study the important interplay between equilibrium and nonequilibrium dynamics to better solidify these connections and to understand which relaxations remain in the same universality classes.

\section*{Materials and Methods}

\subsection*{Details of the SOPS algorithm and proofs}

The SOPS algorithm $\mathcal{M}_{\text{AGG}}$ for aggregation and dispersion is given in Algorithm~\ref{alg:m_agg}.
The algorithm is presented as a Markov chain, but could easily be modified to function as a distributed algorithm executed by each particle independently and concurrently as shown in~\cite{Cannon2016,Cannon2019}.

\begin{algorithm}[H]
    \caption{Markov chain $\mathcal{M}_{\text{AGG}}$ for aggregation and dispersion in SOPS}
    \label{alg:m_agg}
    \begin{algorithmic}[1]
        \Statex Beginning at any configuration of $N$ particles in a bounded region, fix $\lambda > 1$ and repeat:
        \State Choose a particle $P$ uniformly at random; let $\ell$ be the lattice node it occupies.
        \State Choose an adjacent lattice node $\ell'$ and $q \in (0,1)$ each uniformly at random.
        \If {$\ell'$ is empty and $q < \lambda^{-n}$, where $n$ is the number of neighbors $P$ has at $\ell$} \label{alg:m_agg:prob}
            \State $P$ moves to $\ell'$.
        \Else {} $P$ remains at $\ell$.
        \EndIf
    \end{algorithmic}
\end{algorithm}

Recall that Theorem~\ref{thm:aggregation} analyzes the stationary distribution $\pi$ of the Markov chain $\mathcal{M}_{\text{AGG}}$ for aggregation and dispersion.
In particular, Theorem~\ref{thm:aggregation} was shown in \cite{Cannon2019} to hold for $\pi(\sigma) \propto \lambda^{-b(\sigma)} \propto \lambda^{E(\sigma)}$, where $b(\sigma)$ is the number of ``boundary edges'' of the lattice that have exactly one endpoint occupied by a particle.
So it remains to show that $\mathcal{M}_{\text{AGG}}$ converges to this stationary distribution $\pi$.

\begin{lemma} \label{lem:statdist}
    The unique stationary distribution of $\mathcal{M}_{\text{AGG}}$ is $\pi(\sigma) = \lambda^{-b(\sigma)}/Z$, where $Z = \sum_{\tau}\lambda^{-b(\tau)}$ is a normalizing constant.
\end{lemma}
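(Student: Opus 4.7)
The plan is to invoke the standard convergence theorem for finite Markov chains: an irreducible, aperiodic, reversible chain has its unique stationary distribution equal to any distribution satisfying detailed balance. So I need to verify three conditions for $\pi(\sigma) = \lambda^{-b(\sigma)}/Z$ and the chain $\mathcal{M}_{\text{AGG}}$: (i) detailed balance, (ii) irreducibility on the state space $\Omega$ of $N$-particle configurations, and (iii) aperiodicity.

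For detailed balance, I would check $\pi(\sigma)P(\sigma,\tau) = \pi(\tau)P(\tau,\sigma)$ for each pair $\sigma, \tau \in \Omega$. When $\sigma$ and $\tau$ are not related by a single valid move, both sides are zero. Otherwise, suppose $\tau$ arises from $\sigma$ by moving a particle $P$ from its node $\ell$ (with $n$ neighbors in $\sigma$) to an adjacent empty node $\ell'$ (with $n'$ neighbors in $\tau$), so the algorithm prescribes $P(\sigma,\tau) = \tfrac{1}{6N}\lambda^{-n}$ and $P(\tau,\sigma) = \tfrac{1}{6N}\lambda^{-n'}$. The heart of the argument is to compute $b(\tau) - b(\sigma)$ by classifying each edge incident to $\ell$ or $\ell'$ according to the occupancy of its other endpoint and tabulating whether it is a boundary edge before and after the move, then checking the resulting identity against the ratio of transition probabilities.

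Irreducibility follows from a canonical-configuration argument: pick any fixed $N$-particle layout $\sigma^*$ (e.g., the lexicographically smallest one), and show that every $\sigma \in \Omega$ can reach $\sigma^*$ via a sequence of single-particle moves allowed by $\mathcal{M}_{\text{AGG}}$. Here the chain's freedom to pass through disconnected intermediate configurations, a feature not shared by the connectivity-preserving chain of Cannon et al., removes any topological obstruction and permits a straightforward inductive scheme that relocates particles one at a time, using empty nodes as temporary way-points. Reversibility of single-particle moves then yields paths between any two configurations through $\sigma^*$. Aperiodicity is immediate from the chain's self-loops: whenever the selected particle has $n \geq 1$ neighbors the move is rejected with positive probability $1 - \lambda^{-n} > 0$, and whenever the chosen $\ell'$ is occupied no move occurs.

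The main technical obstacle lies in the detailed-balance bookkeeping, specifically in deriving the exact expression for $b(\tau) - b(\sigma)$ in terms of $n$, $n'$, and the lattice degrees of $\ell$ and $\ell'$, and then verifying that this matches the exponent needed to balance the transition probabilities. Once this calculation is in hand, the remaining steps are essentially routine invocations of finite Markov chain theory.
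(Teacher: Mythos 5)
Your proposal follows the same route as the paper's proof: verify detailed balance for $\pi$ against the transition probabilities of $\mathcal{M}_{\text{AGG}}$ and invoke ergodicity of the finite chain for uniqueness. You are more explicit than the paper about irreducibility (canonical-configuration argument, correctly noting that allowing disconnected intermediates removes the topological obstructions of the Cannon et al.\ chain) and aperiodicity (self-loops), both of which the paper simply asserts. The one substantive difference is at what you call the heart of the argument: the paper never computes $b(\tau)-b(\sigma)$ at all; it cites the Metropolis--Hastings design of the original chain to write $\pi(\tau)/\pi(\sigma)=\lambda^{n'-n}$ and matches this against $\lambda^{-n}/\lambda^{-n'}$. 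If you carry out your edge classification honestly on a degree-$6$ lattice, the handshake identity $6N = 2E(\sigma)+b(\sigma)$ gives $b(\tau)-b(\sigma) = -2(n'-n)$ for an interior move, so detailed balance with acceptance probabilities $\lambda^{-n}$ actually certifies $\pi \propto \lambda^{-b(\sigma)/2} \propto \lambda^{E(\sigma)}$, not $\lambda^{-b(\sigma)}$. This factor of two is already latent in the paper's assertion that $\lambda^{-b(\sigma)}\propto\lambda^{E(\sigma)}$ and amounts only to a reparametrization of $\lambda$, but your more careful bookkeeping would surface it, and you should fix which normalization you are proving before claiming the identity ``matches.'' You would also need to say how $b$ treats nodes on the boundary of the bounded region (counting the edges that leave the region keeps every node at effective degree $6$ and preserves the affine relation between $b$ and $E$; otherwise it fails).
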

\begin{proof}
    Let $\sigma$ and $\tau$ be any two SOPS configurations with $\sigma \neq \tau$ such that $\Pr(\sigma, \tau) > 0$, implying that $\tau$ can be reached from $\sigma$ by a single move of some particle $P$.
    Suppose $P$ has $n$ neighbors in $\sigma$ and has $n'$ in $\tau$.
    We must show the \textit{detailed balance condition} holds with respect to the transition probabilities:
    \[\Pr(\sigma, \tau)\pi(\sigma) = \Pr(\tau, \sigma)\pi(\tau)\]
    The algorithms in~\cite{Cannon2016,Cannon2019} were designed using the  Metropolis--Hastings algorithm~\cite{Hastings1970} which specifies transition probabilities $\Pr(\sigma, \tau) = \min\{\pi(\tau)/\pi(\sigma), 1\}$ to capture the ratio between stationary weights of the current and proposed configurations.
    So we have that $\pi(\tau) / \pi(\sigma) = \lambda^{n' - n}$.
    It is then easy to see that this ratio is unchanged by the modified transition probabilities where $\Pr(\sigma, \tau) = \lambda^{-n}$ and $\Pr(\tau, \sigma) = \lambda^{-n'}$, and thus detailed balance is satisfied:
    \[\frac{\Pr(\sigma, \tau)}{\Pr(\tau, \sigma)} = \frac{\lambda^{-n}}{\lambda^{-n'}} = \lambda^{n' - n} = \frac{\pi(\tau)}{\pi(\sigma)}\]
    Therefore, since $\pi$ satisfies detailed balance and $\mathcal{M}_{\text{AGG}}$ is an ergodic finite Markov chain, we conclude that $\pi$ is the unique stationary distribution of $\mathcal{M}_{\text{AGG}}$.
\end{proof}

We conclude by outlining the proof of Theorem~\ref{thm:aggregation} that shows $\mathcal{M}_{\text{AGG}}$ achieves aggregation when $\lambda$ is large enough and dispersion when $\lambda$ is close to one.
Our proof is a series of information-theoretic arguments about the stationary distribution $\pi$.
We use ideas similar to Peierls arguments, which are often used in statistical physics to study phase changes in behavior space for infinite systems~\cite{Friedli2017-statmechlattice}.
In~\cite{Cannon2019} it was shown that, for finite systems, particles of two different colors could either \textit{separate} into monochromatic clusters or \textit{integrate}, indifferent to color.
This separation algorithm can be applied to the setting where a bounded region of the lattice is completely filled with particles that move by ``swapping'' places with their neighbors.
By viewing particles of one color as ``empty space'' and particles of the other color as our particles of interest, the swap moves in the separation algorithm correspond to particle moves within a bounded area.
These are precisely the moves used in our aggregation algorithm, where  separation corresponds to aggregation and integration corresponds to dispersion.
Thus, it is straightforward to leverage the arguments for separation and integration in~\cite{Cannon2019} to show aggregation and dispersion in a bounded region.

For large enough bias $\lambda$, we prove aggregation occurs with high probability as follows.
Using techniques introduced in~\cite{Miracle2011}, we define a map from any configuration without an aggregate to a configuration with an aggregate by (\textit{i}) choosing some scattered particles in a systematic way and (\textit{ii}) rearranging them as an aggregate in a carefully chosen location.
We then show that no aggregate configuration has too many preimages under this map because of the careful way we remove scattered particles.
On the other hand, we show that applying this map to a dispersed configuration leads to a large increase in its stationary probability.
Provided $\lambda$ is large enough that the probability gain outweighs the number of preimages, these two facts imply that aggregated configurations are much more likely to occur in the stationary distribution than dispersed ones.
More formally, the above argument shows that the stationary probability of being in a dispersed configuration is at most $(c_1/\lambda)^{c_2\sqrt{N}}$, where $c_1, c_2 > 0$ are constants that depend on the map described above.
Thus, provided $\lambda$ is large enough, this probability of being in a dispersed configuration is very small, proving that aggregation is achieved with high probability.

When the bias $\lambda$ is close to one, we can prove that dispersion occurs with high probability.
We show that there exist polynomially many events such that if aggregation occurs, then at least one of these events must also occur.
These events correspond to certain regularly-shaped subregions of the lattice being almost entirely occupied by particles.
We then use a Chernoff-type bound to show that each of these events is exponentially unlikely when $\lambda$ is close to one.
This implies that the stationary probability for aggregated configurations is at most the sum of polynomially many terms that are each exponentially small, so dispersion must occur with high probability for this range of $\lambda$.

\subsection*{BOBbot design}

The BOBbot mechanical design was developed in SolidWorks, and its skeleton was 3D printed in ABS plastic by a Stratsys UPrint SE Plus printer at a layer resolution of 0.010 inches and sparse density (\figtext~\ref{fig:bobbotdesign}).
Each BOBbot contains a lithium ion polymer battery (Adafruit Industries) that is equipped with Qi wireless charging for recharging between experiments (Adafruit Industries).
The brushbot design is implemented using an ERM (BestTong) for vibrations and two Pienoy dog toothbrush heads as feet, yielding noisy circular trajectories (Movie S2).
The BOBbot's motor circuitry was assembled on a Printed Circuit Board (PCB) designed in EagleCAD (\figtext~\ref{fig:bobbotcircuitry}).
The PCBs were printed at the Georgia Tech Interdisciplinary Design Commons makerspace and outsourced from JLCPCB.
This circuitry is switched and modulated by a phototransistor (Adafruit Industries), which acts as a proportional controller for motor speed.
Grade N42 neodymium magnets (K\&J Magnetics) are housed in the BOBbot chassis for inter-robot attraction, and can be swapped for magnets of different strengths to modulate the BOBbots' cohesion.
A complete list of BOBbot components can be found in Table~\ref{tab:components}.

To achieve stress sensing, each BOBbot is equipped with four triggers that mechanically deform and close the circuit upon collisions to sense the locally exerted stress (\figtext~\ref{fig:stress}B,~\ref{fig:bobbotcircuitry}).
These triggers are positioned radially in front of the permanent magnets in the chassis.
The stress sensors function such that a robot decreases its motor speed for sufficiently large stress (\figtext~\ref{fig:stress}C).
The analog circuit is designed to reduce the motor's current in a manner proportional to the total number of contacts, starting with roughly $70\%$ reduction for a single triggered sensor (\figtext~\ref{fig:stress}C, top).
When multiple sensors are triggered, a BOBbot's speed is practically negligible.

\subsection*{Simulations}

To simulate the SOPS, we execute the algorithm on a hexagonal lattice.
The size of the lattice in \figtext~\ref{fig:SOPS} is chosen to be sufficiently large so that boundary effects are mitigated.
The size of the lattice for \figtext~\ref{fig:alganalysis} is chosen to match the area density and the number of agents in the physical evolution and algorithm.
To determine the constant $k_0$ in the aggregation metric $AGG_{MC} = N_{MC} / (k_0 P_{MC} \sqrt{N})$, we consider a hexagon with area $N_{MC} = \frac{\sqrt{3}}{4}\ell^2 \cdot 6$ and perimeter $P_{MC} = 6\ell$, setting $k_0$ so that $AGG_{MC} = 1$.
This yields $k_0 = 1/\sqrt{8\sqrt{3}}$.

Beyond the information described in the main text, the DEM simulations faithfully represent the spherical loose magnets with exponentially decaying force housed in each BOBbot's chassis slots, resulting in patchy magnetic interaction as the magnets move freely in their slots.
Attraction between two simulated BOBbots is calculated based on these magnetic spheres' strength and the minimum physical separation between any interacting pair, which depends on the relative position and orientation of the two BOBbots.

To calibrate our DEM simulations, we measure the BOBbots' physical parameters and use these values for the simulated BOBbots (Table~\ref{tab:parameters}).
Most parameters such as the mass and dimensions of each BOBbot are directly measured.
For others, we use a series of experiments designed to isolate individual parameters.
For instance, to avoid possible system errors such as in-plane friction when measuring the magnetic force, we measured the minimum force needed to overwhelm the magnetic force in vertical direction (\figtext~\ref{fig:FM0measure}).
Other indirect measurements involve the translational and rotational drag (\figtext~\ref{fig:dragmeasure},~\ref{fig:rotdragmeasure}).
The key ingredient in these experiments is to use a known force (Earth's gravity) to calibrate these intricate forces.
Details can be found in Section~S2 of the Supplementary Materials.

\begin{table}[ht]
    \centering
    \begin{tabular}{|clll|}
        \hline
        & \textbf{Description} & \textbf{Experiment} & \textbf{Simulation} \\
        \hline\hline
        $m$ & BOBbot mass & 0.060 kg & 0.060 kg \\ 
        \hline
        $R_0$ & BOBbot radius & 0.030 m & 0.030 m \\
        \hline
        $I$ & BOBbot moment of inertia & 2.7e-5 kg$\cdot$m$^2$ & 2.7e-5 kg$\cdot$m$^2$\\
        \hline
        $R_C$ & \makecell[l]{radius of the regular\\ circular motion} & 25 $\pm$ 5 mm & 25 mm\\
        \hline
        $R_{B_0}$ & radius of the magnetic bead & 2.3 mm & 2.0 mm \\
        \hline
        $R_S$ & \makecell[l]{thickness of the magnet\\ cavity shell} & 2.0 mm & 2.0 mm \\
        \hline
        $R_B$ & \makecell[l]{effective radius of\\ the magnetic bead} & 4.3 mm & 4.0 mm\\
        \hline
        $v_0$ & Saturated speed & 48.4 $\pm$ 20.2 mm/s & 60.0 mm/s\\
        \hline
        $\omega_0$ & \makecell[l]{saturated angular velocity\\ of the orbit} & 1.94 $\pm$ 0.81 rad/s & 2.40 rad/s\\
        \hline
        $F_D$ & translational drive & 0.07 N & 0.06 N \\
        \hline
        $\tau_D$ & rotational drive (torque) & 5e-4 N$\cdot$m & 5.5e-4 N$\cdot$m \\
        \hline
        $\eta$ & translational drag coefficient & $\sim$1 kg/s & 1.0 kg/s\\
        \hline
        $\eta_{\varphi}$ & \makecell[l]{rotational drag coefficient} & $\le$3e-4 N$\cdot$m$\cdot$s & 2.3e-4 N$\cdot$m$\cdot$s\\
        \hline
        $F_{M0}$ & magnetic force on contact & 3-35 gf & 3-35 gf\\
        \hline
        $d_0$ & magnetic force decay length & 1.5 mm & 1.5 mm\\
        \hline
        $\mu$ & bot-bot friction coefficient & 0.143 & 0.143 \\
        \hline
        $\mu_W$ & bot-wall friction coefficient &  & 0.143 \\
        \hline
    \end{tabular}
    \caption{\textbf{List of parameters used in physical simulations.}}
    \label{tab:parameters}
\end{table}

The DEM simulations use the Euler-Maruyama method with a time step of 1 ms to integrate the following Newton equations:
\begin{align*}
    m\ddot{\vec{r}} &= F_D\hat{u} - \eta\dot{\vec{r}} + \vec{F}_{\text{env}}(\vec{r},\varphi) + \vec{\xi}(t) \\
    I\ddot{\varphi} &= \tau_D - \eta_{\varphi}\dot{\varphi} + \tau_{\text{env}}(\vec{r},\varphi) + \xi_{\varphi}(t)
\end{align*}
As the agents are in the overdamped regime where $|m\ddot{\vec{r}}| \ll \eta|\dot{\vec{r}}|$, the Newton equations are equivalent to the Langevin equations for active Brownian particles by taking the limit $m,I \to 0$.
\begin{align*}
    \dot{\vec{r}} &= v_0\hat{u} + \vec{F}_{\text{env}}(\vec{r},\varphi)/\eta + \vec{\xi}(t)/\eta \\
    \dot{\varphi} &= \omega_0 + \tau_{\text{env}}(\vec{r},\varphi)/\eta_{\varphi} + \xi_{\varphi}(t)/\eta_{\varphi}
\end{align*}
As we see from the reduced equations, in the steady state, a BOBbot will perform a circular motion with a saturated speed $v_0 = F_D/\eta$ and a frequency of $\omega_0=\tau_D/\eta_{\varphi}$.
This suggests that we can control a BOBbot's speed $v_0$ by changing its motor vibration strength, varying $F_D$.

The initial placement of the BOBbots is achieved by greedy rejection sampling, sequentially placing BOBbots in random positions that do not overlap with the previously placed BOBbots.
A cell list search method is used to speed up the simulation's computation by subdividing the simulated arena into square cells so that, when integrating forces for a given BOBbot, only consider interactions with BOBbots from the same or adjacent cells.
The size of the cells is chosen such that the relative error caused by this approximation is within $10^{-3}$.

\section*{Acknowledgments}

Our BOBbots and their behavior were inspired by granular materials pioneer Prof.\ Robert Behringer (1948--2018) and by discussions originating at the 2018 Granular Matter Gordon Research Conference.
We would like to thank Kurt Weisenfeld, Bulbul Chakraborty, Yasemin Ozkan Aydin, Jennifer Rieser, and Andrew Zangwill for helpful discussions.
We also thank undergraduates Rida Abbas and Lewis Campbell and high school student Paul Aidan Loughlin for their assistance in conducting experiments.
This research was performed in part using resources provided by the Open Science Grid~\cite{pordes2007open,sfiligoi2009pilot}, which is supported by the NSF under award PHY-1148698 and by the U.S. Department of Energy's Office of Science.
We are immensely grateful to Sudarshan Ghonge and Carrie Brown for their help in setting up the Open Science Grid cluster implementations of our DEM simulations.
We also would like to thank our reviewers who contributed insightful ideas that improved this work, including the introduction of disaggregating in the object transport simulations and the connections between our theoretical model and the Cahn--Hilliard equation.
\textbf{Funding:} This work was supported by the Department of Defense under MURI award \#W911NF-19-1-0233 and by NSF awards DMS-1803325 (S.C.), CCF-1422603, CCF-1637393, and CCF-1733680 (A.W.R.), CCF-1637031 and CCF-1733812 (D.R. and D.I.G.), and CCF-1526900 (D.R.).
\textbf{Author Contributions:} B.D., S.L., E.A., and  D.I.G.\ were responsible for experiments, S.L., B.D., R.A., and D.I.G.\ were responsible for simulations, and S.C., J.J.D., A.W.R., and D.R.\ were responsible for the algorithmic analysis.
D.I.G.\ and D.R.\ coordinated the integration of experimental and theoretical perspectives.
\textbf{Data and Materials Availability:} All data needed to evaluate the conclusions in the paper are present in the paper and/or the Supplementary Materials.
Additional data related to this paper may be requested from the authors.

\bibliographystyle{unsrt} 
\bibliography{ref}

\clearpage

\renewcommand{\thefigure}{S\arabic{figure}}
\setcounter{figure}{0}
\renewcommand{\thetable}{S\arabic{table}}
\setcounter{table}{0}

\section*{Supplementary Materials}

\figtext~\ref{fig:bobbotdesign}. Cross-sectional views of the BOBbot mechanical design.\\
\figtext~\ref{fig:bobbotcircuitry}. BOBbot circuitry.\\
\figtext~\ref{fig:expplatform}. Experimental platform design and details.\\
\figtext~\ref{fig:FM0measure}. Calibration experiment for calculating magnet force $F_{M0}$.\\
\figtext~\ref{fig:dragmeasure}. Calibration experiment for calculating translational drag coefficient $\eta$.\\
\figtext~\ref{fig:rotdragmeasure}. Calibration experiment for calculating rotational drag coefficient $\eta_{\varphi}$.\\
\figtext~\ref{fig:airflowmeasure}. Boundary airflow effects in experiment and simulation.\\
\figtext~\ref{fig:NMCvsV}. Dependence of maximum cluster size $N_{MC}$ on BOBbot speed $v_0$ and curvature $R_C$.\\
\figtext~\ref{fig:detachprob}. Probability of detachment for various magnetic attraction.\\
\figtext~\ref{fig:transporttrajectories}. Object transport trajectories.\\
\figtext~\ref{fig:transporttheory}. Toy model for object transport.\\
\figtext~\ref{fig:shuffling}. Transport enhanced by disaggregating.\\
\figtext~\ref{fig:IsingSOPS}. Examples showing $\Delta H_\text{Ising} = 2J\Delta H_\text{SOPS}$.\\
\figtext~\ref{fig:bifurcation}. Critical surface tension $\gamma$ and bias parameter $\lambda$.\\
\figtext~\ref{fig:CHsim}. Pattern formation below and above critical $\lambda$.\\
\figtext~\ref{fig:thermoDynLim}. Approaching $P_{MC} \propto N_{MC}^{1/2}$ with periodic boundary conditions.\\
Table~\ref{tab:components}. List of BOBbot components.\\
Movie S1. Aggregation dynamics in a self-organizing particle system (SOPS).\\
Movie S2. Individual BOBbot dynamics.\\
Movie S3. Aggregation and dispersion in BOBbot collectives.\\
Movie S4. Aggregation and dispersion in simulated BOBbot collectives.\\
Movie S5. BOBbot collective dynamics in large simulated systems.\\
Movie S6. Enhanced BOBbot aggregation using mechanical stress sensing.\\
Movie S7. Object transport by BOBbot collectives.

\subsection*{S1. BOBbot design and manufacturing}

\figtext~\ref{fig:bobbotdesign} depicts various cross-sectional views of a BOBbot's design and corresponding skeletal structure.
\figtext~\ref{fig:bobbotcircuitry} shows the PCB design and assembly.
Table~\ref{tab:components} lists all components used in BOBbot manufacturing.
Finally, \figtext~\ref{fig:expplatform} shows the design and details of the experimental platform.

\begin{figure}[ht]
    \centering
    \includegraphics[width=0.55\textwidth]{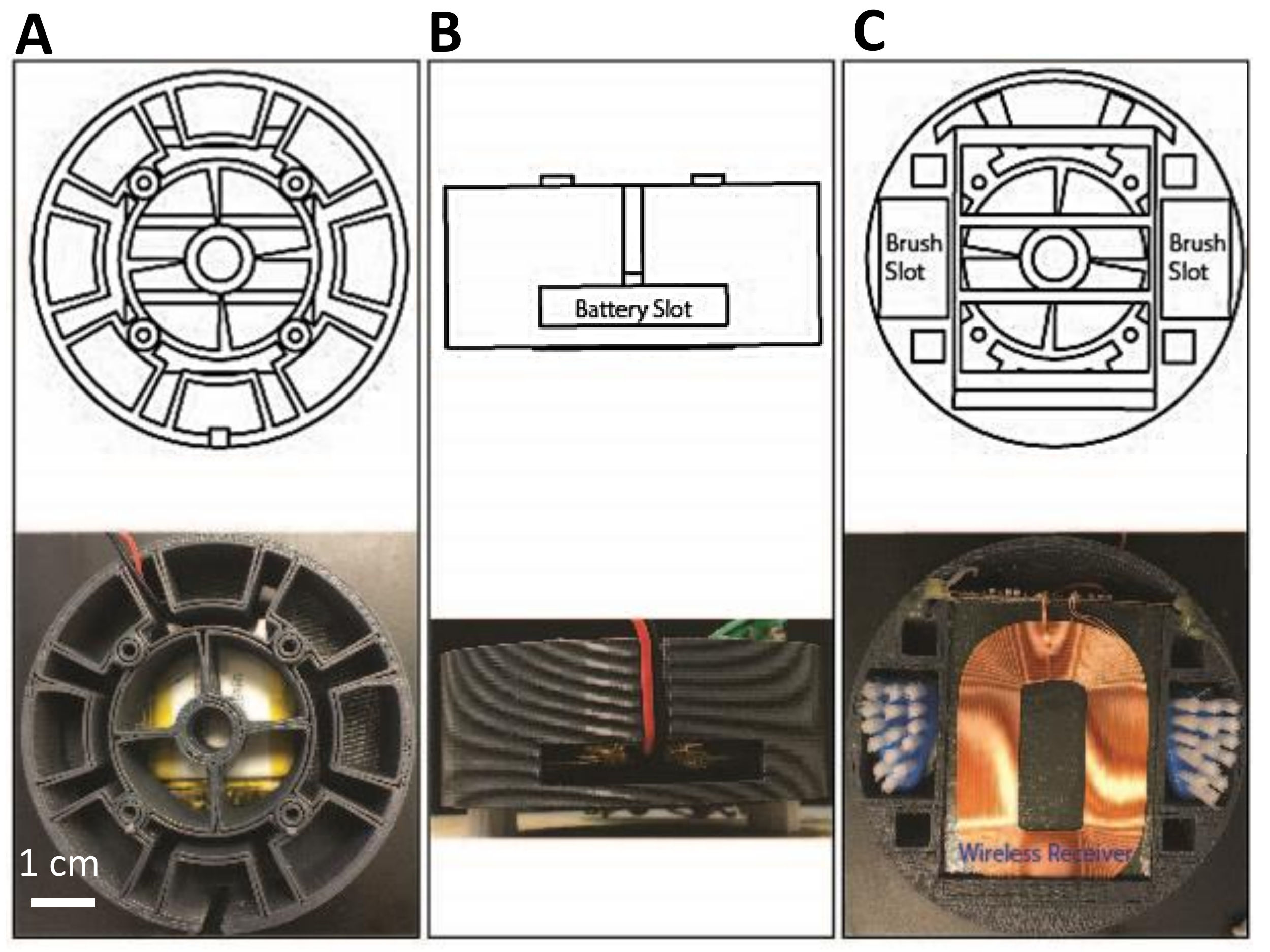}
    \caption{\textbf{Cross-sectional views of the BOBbot mechanical design.}
    SolidWorks designs and assembled versions of (\textbf{A}) the BOBbot shell and magnet slots, (\textbf{B}) the battery slot, and (\textbf{C}) the brush slots and wireless QR charge receiver.}
    \label{fig:bobbotdesign}
\end{figure}

\begin{figure}[ht]
    \centering
    \includegraphics[width=0.55\textwidth]{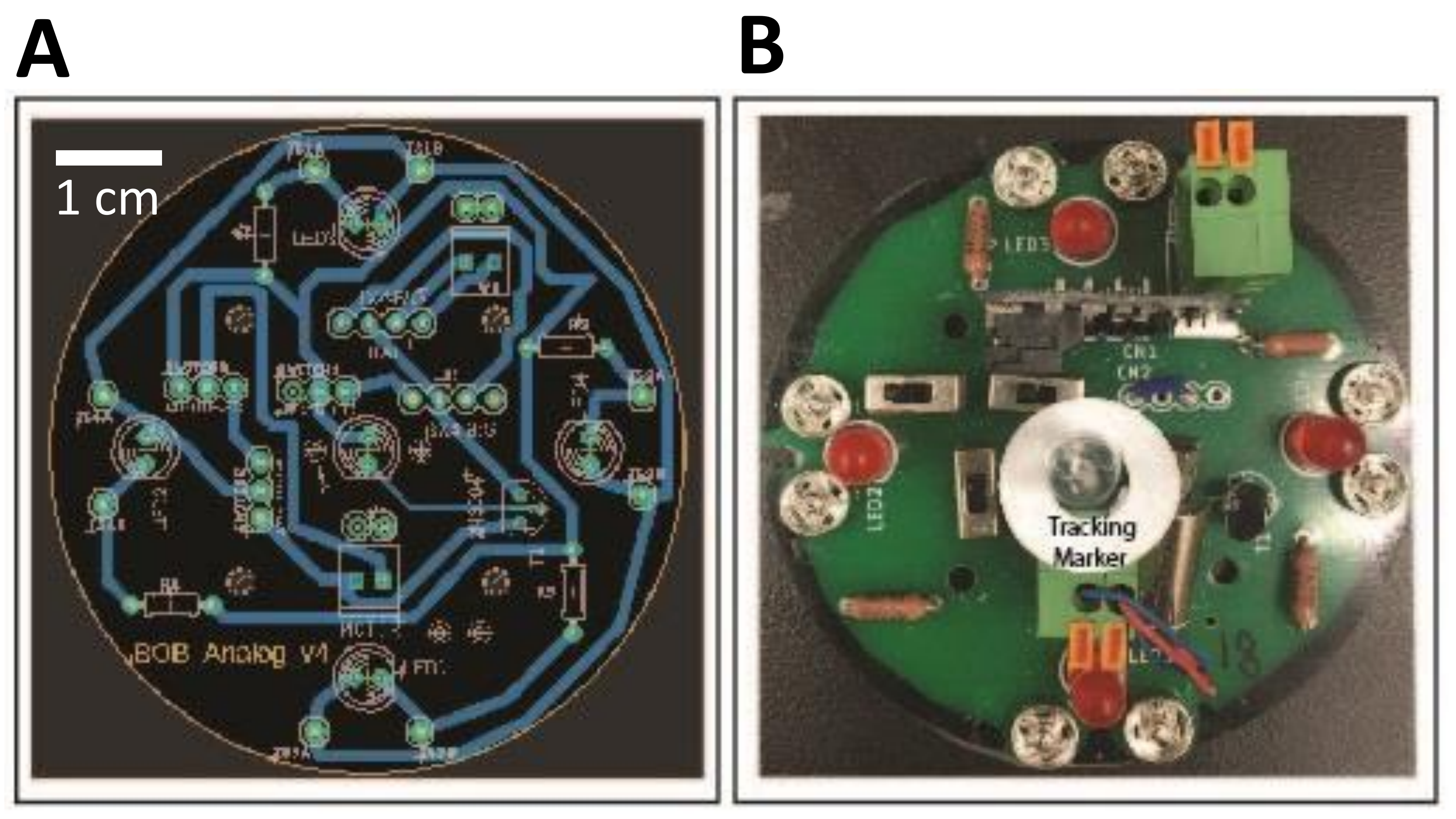}
    \caption{\textbf{BOBbot circuitry.}
    (\textbf{A}) The analog PCB design, made in EagleCAD.
    (\textbf{B}) The printed and assembled PCB.}
    \label{fig:bobbotcircuitry}
\end{figure}

\begin{table}[ht]
    \centering
    \footnotesize
    \begin{tabular}{|l|l|l|}
        \hline
        \textbf{Component} & \textbf{Manufacturer} & \textbf{Product Name} \\
        \hline
        ERM Motor & BestTong & \href{https://www.amazon.com/dp/B073JKQ9LN}{DC 3.7V 9500RPM Vibrating Coreless Brushed Motor} \\
        Brushes & Pienoy & \href{https://www.amazon.com/dp/B07GQPVFNB}{Double-Headed Pet Toothbrush} \\
        Magnets & K\&J Magnetics & \href{https://www.kjmagnetics.com/products.asp?cat=12}{S2 and S3} \\
        Battery & Adafruit Industries & \href{https://www.adafruit.com/product/1578}{Lithium Ion Polymer Battery 3.7V 500mAh} \\
        Battery Module & Adafruit Industries & \href{https://www.adafruit.com/product/1904}{Micro-Lipo Charger} \\
        Qi Transmitter & Adafruit Industries & \href{https://www.adafruit.com/product/2162}{Universal Qi Wireless Charging Transmitter} \\
        Qi Receiver & Adafruit Industries & \href{https://www.adafruit.com/product/1901}{Universal Qi Wireless Receiver Module} \\
        Red/Black Wiring & Adafruit Industries & \href{https://www.adafruit.com/product/288}{Solid-Core Wire Spool} \\
        LED & KingSo & \href{https://www.amazon.com/dp/B07FKKN2TC}{500pcs LED Diode Lights} \\
        Phototransistor & Adafruit Industries & \href{https://www.adafruit.com/product/2831}{Photo Transistor Light Sensor} \\
        Resistors & Vishay/Dale & \href{https://www.mouser.com/ProductDetail/Vishay-Dale/RN55C50R0BB14?qs=sGAEpiMZZMsPqMdJzcrNwns55x5W4xb46fw7oH3zv94\%3D}{Metal Film Resistor 1/10 Watt 50 Ohm 0.1\% 50ppm} \\
        Transistors & ON Semiconductor & \href{https://www.mouser.com/ProductDetail/ON-Semiconductor-Fairchild/2N3904BU?qs=sGAEpiMZZMshyDBzk1\%2FWi4G1GLBZKHK10QNLT596yR0\%3D}{General Purpose Bipolar Transistor} \\
        Switches & Pololu & \href{https://www.pololu.com/product/1408}{Mini Slide Switch 3-Pin, SPDT, 0.3A} \\
        Terminal Block & Pololu & \href{https://www.pololu.com/product/2425}{Screwless Terminal Block: 2-Pin, 0.1'' Pitch} \\
        Button Snaps & Adafruit Industries & \href{https://www.adafruit.com/product/1126}{Sewable Snaps, 5mm Diameter} \\
        Masking Tape & Daigger & \href{https://www.amazon.com/dp/B00BQZ1FRA}{DAI-T34-27-C Assorted Label Tape Pack} \\
        Jumper Cables & Anezus & \href{https://www.amazon.com/gp/B07LG81W22}{700pcs Jumper Wire Kit Breadboard Wires} \\
        \hline
    \end{tabular}
    \normalsize
    \caption{\textbf{List of BOBbot components.}}
    \label{tab:components}
\end{table}

\begin{figure}[ht]
    \centering
    \includegraphics[width=0.8\textwidth]{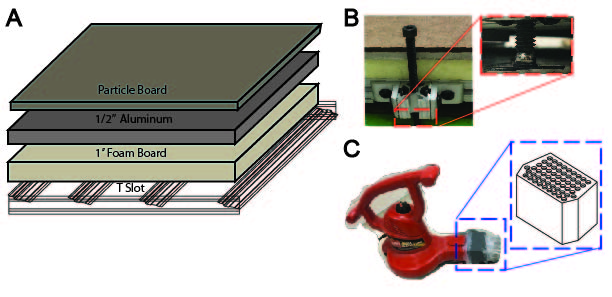}
    \caption{\textbf{Experimental platform design and details.}
    (\textbf{A}) The experimental platform is composed of a T-slot base supporting the foam board, aluminum, and particle board layers.
    (\textbf{B}) Levelling screws in the T-slot framing allow for incline adjustment.
    (\textbf{C}) A leaf blower with a multi-pronged tygon tubing attachment provide airflow to the PVC pipe boundary to mitigate boundary effects.}
    \label{fig:expplatform}
\end{figure}

\newpage

\subsection*{S2. Calibration experiments}

The DEM simulation parameters are calibrated to match the physical BOBbot features.
Many parameters such as the mass and dimensions of each BOBbot are easily measured.
However, other parameters are better calculated by conducting simple experiments.
The first such experiment (\figtext~\ref{fig:FM0measure}) calculates the magnetic force $F_{M0}$ between two magnets when their BOBbots' shells are touching.
The first magnet is placed in a BOBbot shell attached to a rigid stand; a second shell is then tethered beneath the first by placing the second magnet inside it.
Thus, the second shell falls once its weight exceeds $F_{M0}$.
To leverage this insight, a cup is tethered to the second shell and BBs are added to the cup one-by-one until the second shell falls (\figtext~\ref{fig:FM0measure}A).
The weight of the shell, cup, and BBs are then measured to obtain a value of $F_{M0}$ that is precise up to $0.1$ g, the weight of a single BB (\figtext~\ref{fig:FM0measure}B).
On a log-linear plot of force, our measurements show exponential decay, which aligns closely with those reported by the magnet manufacturer (\figtext~\ref{fig:FM0measure}C). A power law fit would gives an exponent of $-18$, which is far off from the $-4$ from dipole-dipole interaction, thus indicating an exponential decay as a better representation.

\begin{figure}[ht]
    \centering
    \includegraphics[width=0.7\textwidth]{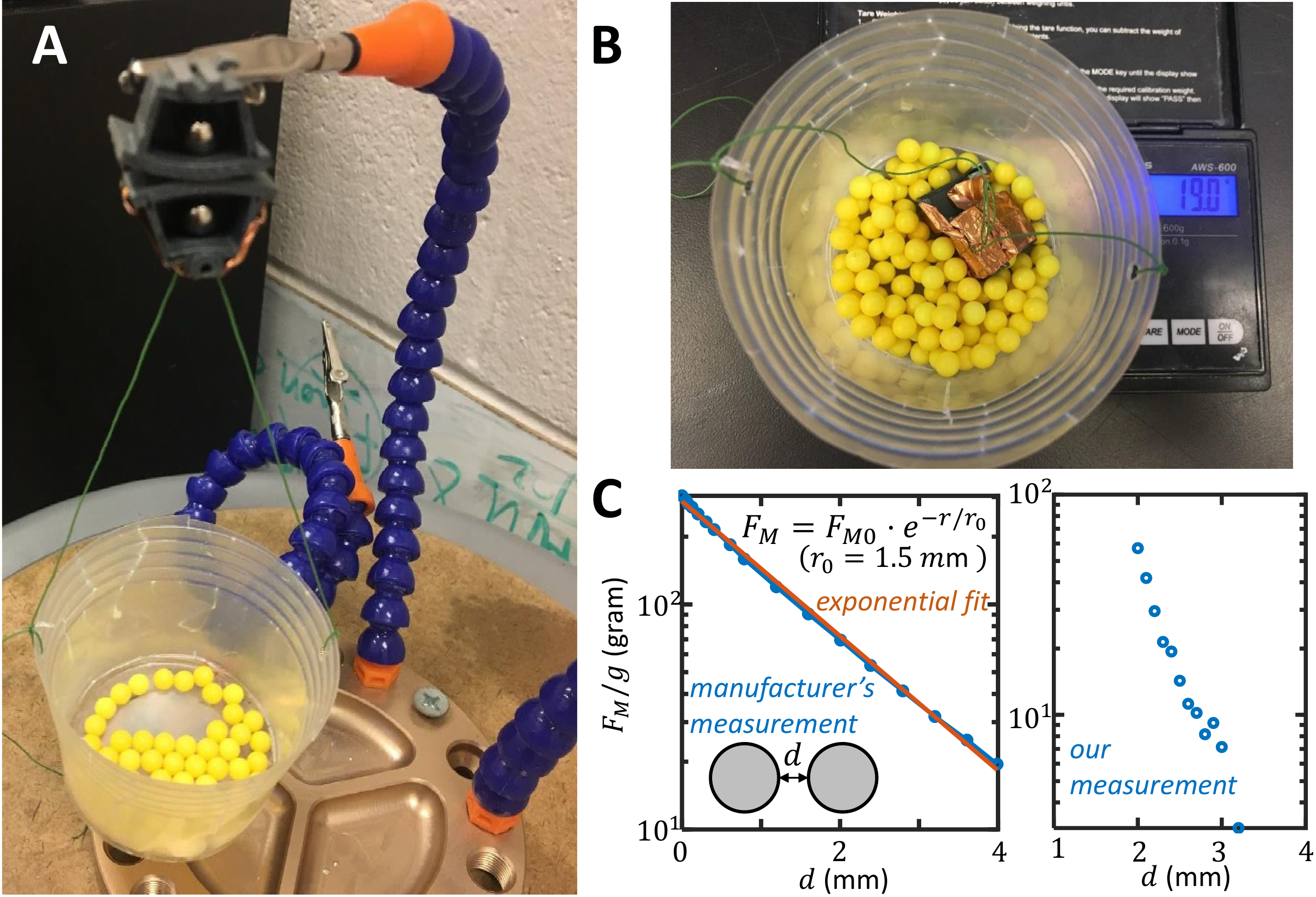}
    \caption{\textbf{Calibration experiment for calculating magnet force $F_{M0}$.}
    (\textbf{A}) The experimental setup for calculating $F_{M0}$.
    (\textbf{B}) Measuring the weight of the tethered apparatus once it falls gives a close approximation of $F_{M0}$.
    (\textbf{C}) Left: the magnetic force's decay with the separation $d$ between two magnetic beads from the manufacturer for $4.6$mm-diameter beads. Right: our measurement for $6.4$ mm-diameter beads.}
    \label{fig:FM0measure}
\end{figure}

Each BOBbot's position $\vec{r}$ and orientation $\varphi$ changes at a constant rate subject to noise.
A BOBbot's constant translational speed $v_0$ comes from the competing driving force $F_D\hat{u}$ and the translational drag $-\eta\dot{\vec{r}}$.
Similarly, each BOBbot's constant rotational speed $\omega_0$ comes from the competing driving torque $\tau_D$ and the rotational drag $-\eta_\varphi \dot{\varphi}$.
The steady-state speeds therefore follow $v_0 = F_D/\eta$ and $\omega_0 = \tau_D/\eta_\varphi$.
We again use simple experiments to determine the drive and drag.
To measure the translational drag $\eta$, we compare a BOBbot's trajectory when it is on a $0^\circ$ incline versus a tilted incline.
In the former, the BOBbot circles regularly with some noise; in the latter, this regular circling is stretched towards the direction of gravity on the incline (\figtext~\ref{fig:dragmeasure}, top).
Using the known gravitational force on the BOBbot, we can calculate the translational drag force and coefficient $\eta$.
We then simulate a BOBbot's motion using different translational drag coefficients; the one that produces the trajectory most closely matching those in the experiments is chosen as the simulation $\eta$ (\figtext~\ref{fig:dragmeasure}).

\begin{SCfigure}[][ht]
    \centering
    \includegraphics[width=0.7\textwidth]{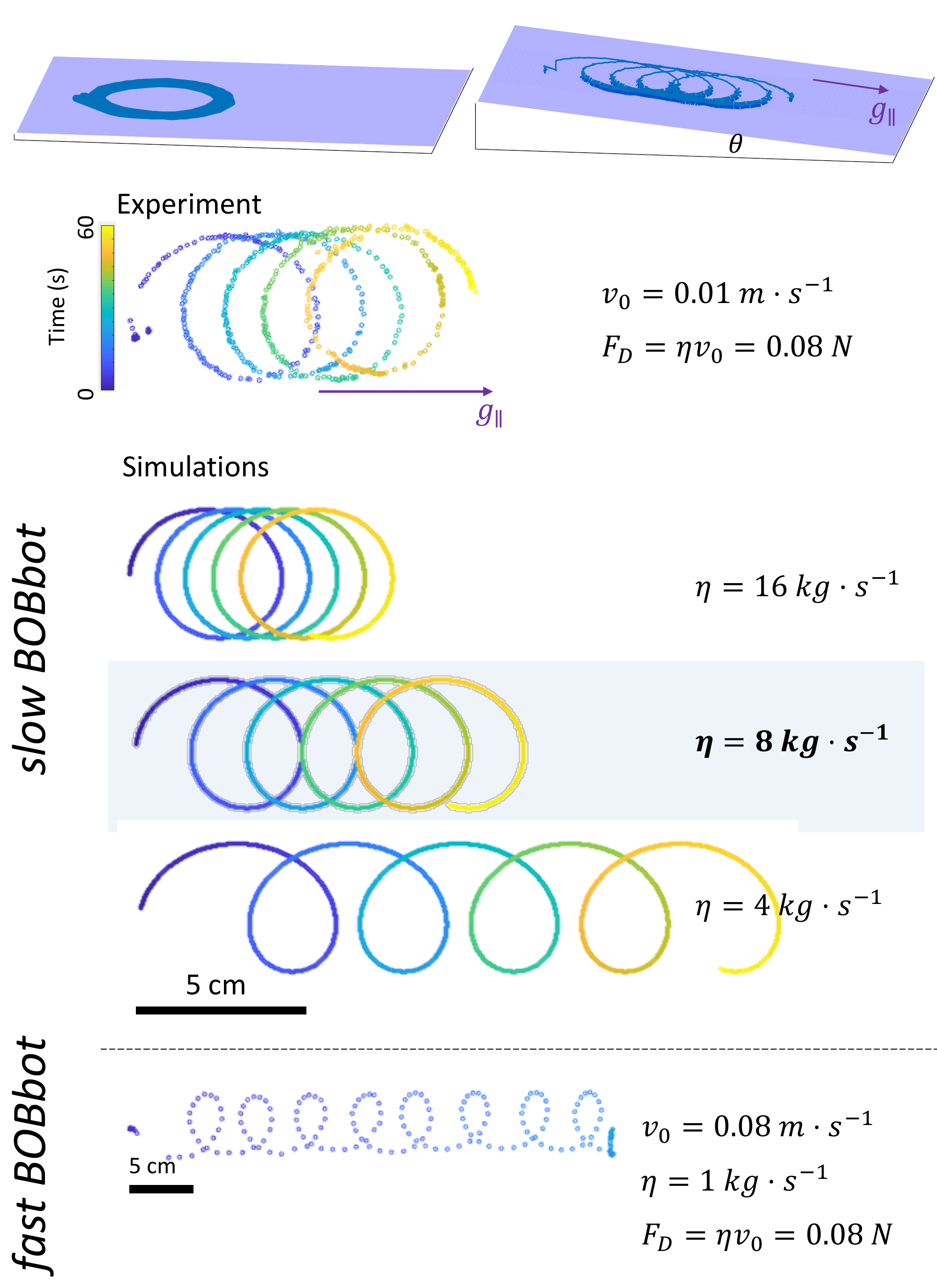}
    \caption{\textbf{Calibration experiment for calculating translational drag coefficient $\eta$.}
    When a BOBbot is driven on a level plane, it circles regularly with some noise.
    When placed on a tilted incline, its trajectory is stretched towards the direction of gravity on the incline.
    Using this known force, we measure the drag force by simulating BOBbot trajectories on a tilted incline using different drag coefficients, comparing each trajectory's stretch to that of the experiment.
    The correct drag produces a close approximate of the experimental trajectory.
    We find that viscosity varies between BOBbots, implying that the their speeds also vary.
    The first three trajectories are from a BOBbot with relatively slow velocity $v_0$; the last is from a fast BOBbot.}
    \label{fig:dragmeasure}
\end{SCfigure}

The measurement of the rotational drag $\eta_\varphi$ exploits its balance with the driving torque.
To measure the rotational torque exerted on a BOBbot, a very light rigid straw is attached across the diameter of a BOBbot (\figtext~\ref{fig:rotdragmeasure}).
We then let the BOBbot use the straw to push objects at various arm lengths.
For a given obstacle to push, the rotational torque is obtained by finding the largest torque of friction on an obstacle to balance.
We decrease the arm length from a large value to a point the BOBbots can just push the obstacle.
Given the measured saturated angular velocity $\omega_0$, the rotational drag can be inferred as $\eta_\varphi = \tau_D/\omega_0$.

\begin{figure}[ht]
    \centering
    \includegraphics[width=0.9\textwidth]{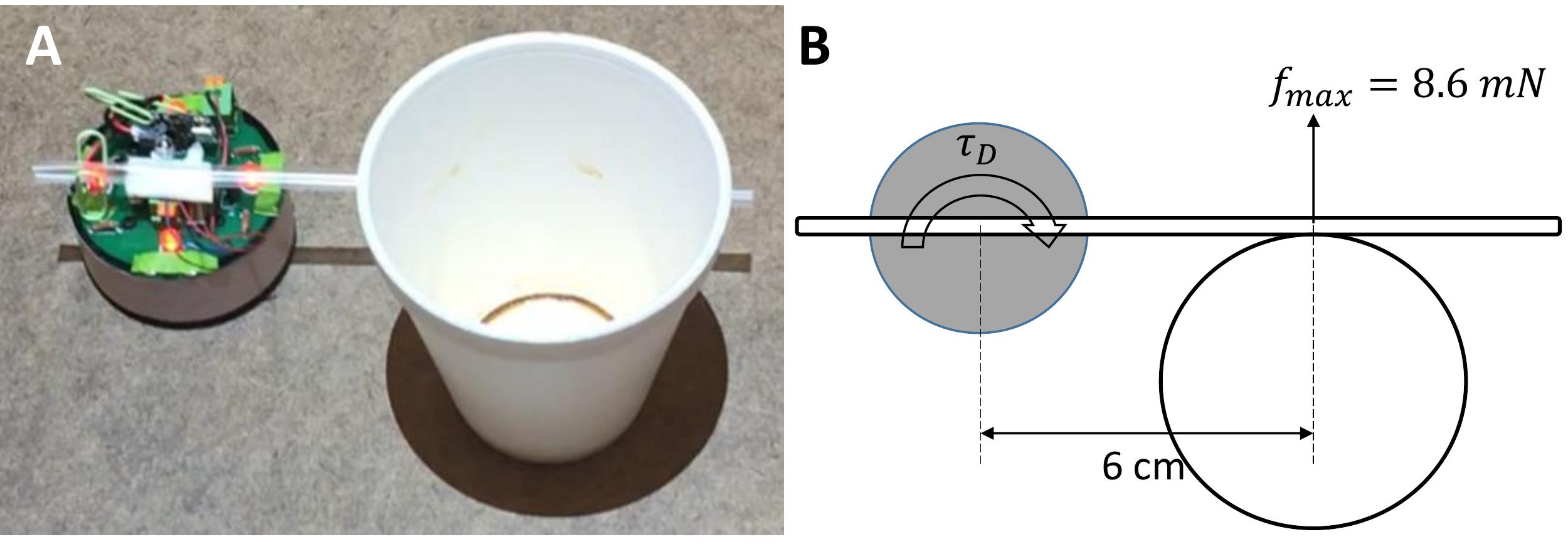}
    \caption{\textbf{Calibration experiment for calculating rotational drag coefficient $\eta_\varphi$.}
    (\textbf{A}) The experimental setup and (\textbf{B}) the corresponding force diagram, where $f_{max}$ denotes the largest frictional torque that the driving torque $\tau_D$ can balance.}
    \label{fig:rotdragmeasure}
\end{figure}

Many of our preliminary experiments were adulterated by boundary effects that caused small groups of BOBbots to collect at the edges and corners of the arena, affecting steady state properties.
We mitigate these affects using airflow-based boundary repulsion.
To characterize these airflow effects, a BOBbot is placed close to the boundary and its trajectory is tracked with and without airflow (\figtext~\ref{fig:airflowmeasure}).
The corresponding simulation parameters are then chosen to match the average characteristics of these experimental trajectories.
The airflow force profile is chosen to match the decay length observed in the example experiment (which is $R_A = 6R_0$).
The resting speed of the bot used in this experiment is $v_0 = 3$~cm/s.
Note that the decay length chosen in the simulation runs throughout our study is $2R_0$ and $v_0 = 6$~cm/s.

\begin{figure}[ht]
    \centering
    \includegraphics[width=0.5\textwidth]{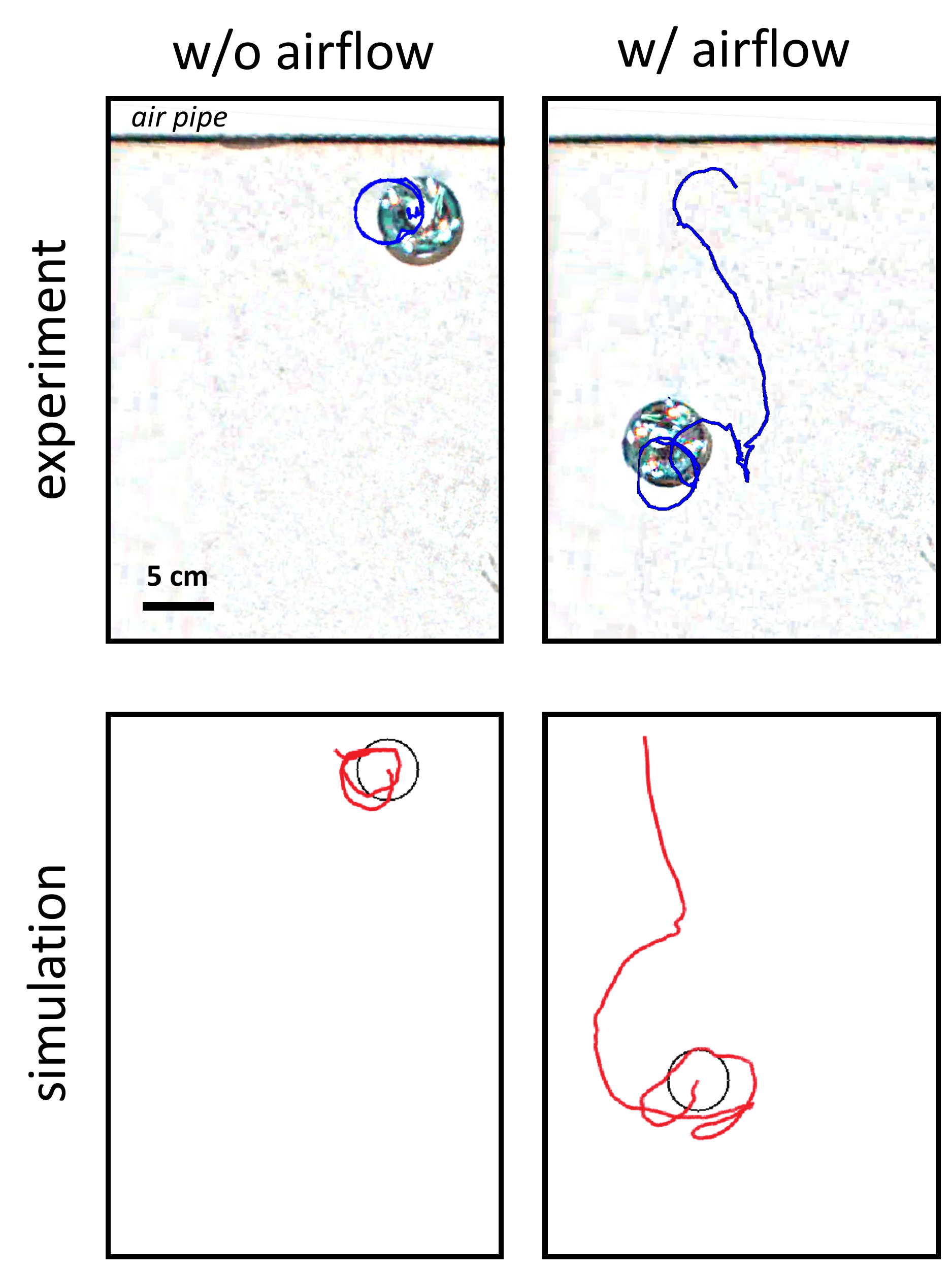}
    \caption{\textbf{Boundary airflow effects in experiment and simulation.}
    Movie S6 shows BOBbot trajectories with and without airflow effects.}
    \label{fig:airflowmeasure}
\end{figure}

\newpage

\subsection*{S3. Sensitivity to initial conditions}

The BOBbots move in deterministic but noisy circular trajectories.
Although the noise only causes small deviations in the individual trajectories, the combination of all the forces in the ensemble makes the system behavior very sensitive to initial conditions.
Simulations with the same random seed started from almost exactly the same initial conditions except for a discrepancy of 0.5 mm of one robot's initial position deviate from each other significantly after 1 minute (\figtext~\ref{fig:sensitivity}).
We therefore regard this system as ergodic, yielding reasonable statistical sampling on longer time scales.

\begin{figure}[ht]
    \centering
    \includegraphics[width=0.9\textwidth]{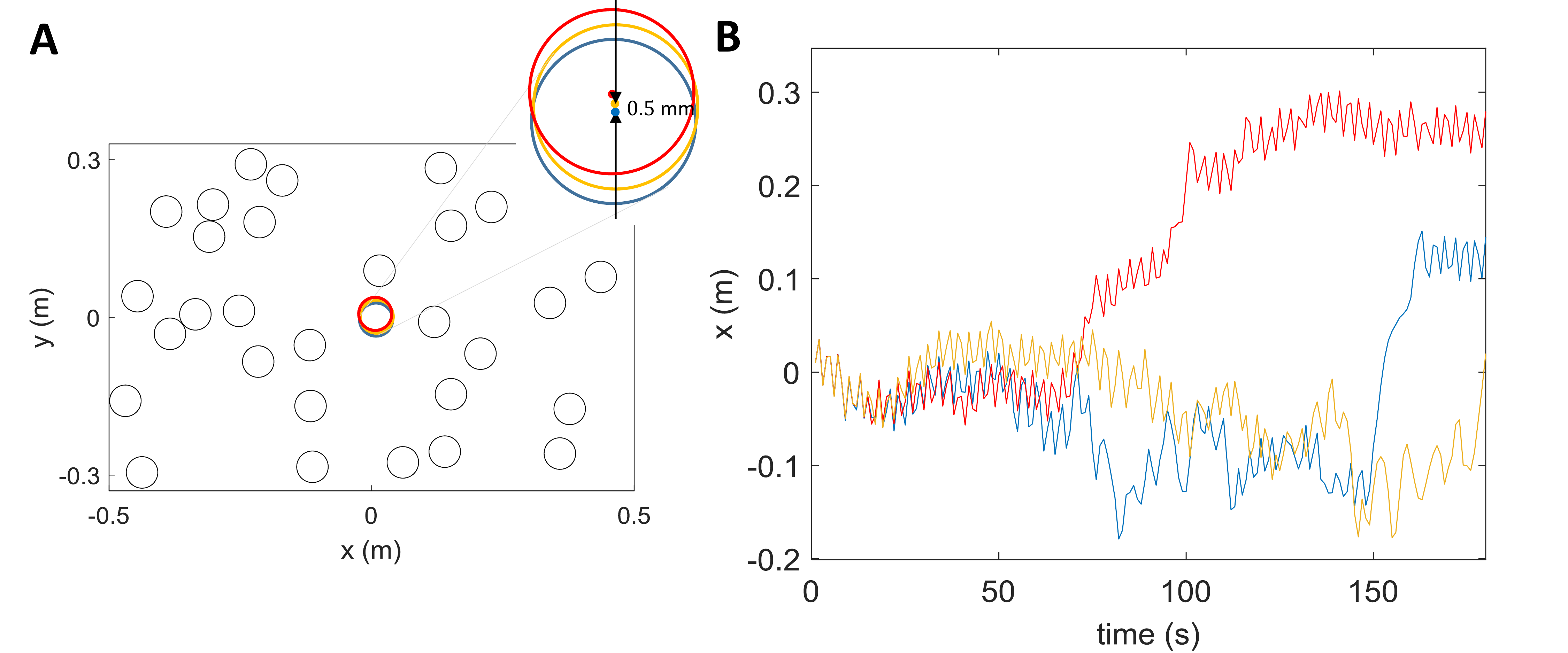}
    \caption{\textbf{Ensemble sensitivity to initial conditions.}
    (\textbf{A}) To measure the ensemble's sensitivity to initial conditions, we simulated three runs with the same random seeds and nearly identical initial conditions, with the exception of a spatial difference of 0.5 mm for the initial position of one BOBbot, shown in blue, yellow, and red.
    (\textbf{B}) The $x$-position over time of the perturbed BOBbot for different starting positions, where the curve color corresponds to the starting position in (A).
    The trials only coincide for roughly the first 30 s, and after one minute they diverge significantly.}
    \label{fig:sensitivity}
\end{figure}

\subsection*{S4. Dependence of maximum cluster size on BOBbot speed and curvature}

To investigate the effect of the BOBbots' individual speeds on the size of the maximum cluster, we run simulations for $F_{M0} = 3, 7, 10$~g with 20 repetitions for 8 speeds equally spaced in range $v_0 = 1$--$8$~cm/s.
\figtext~\ref{fig:NMCvsV} shows how the maximum cluster size $N_{MC}$ decreases as the BOBbots' individual speed is increased.
We also find that $N_{MC}$ increases with larger radii of curvature corresponding to decreased torque.

\begin{figure}[ht]
    \centering
    \includegraphics[width=0.9\textwidth]{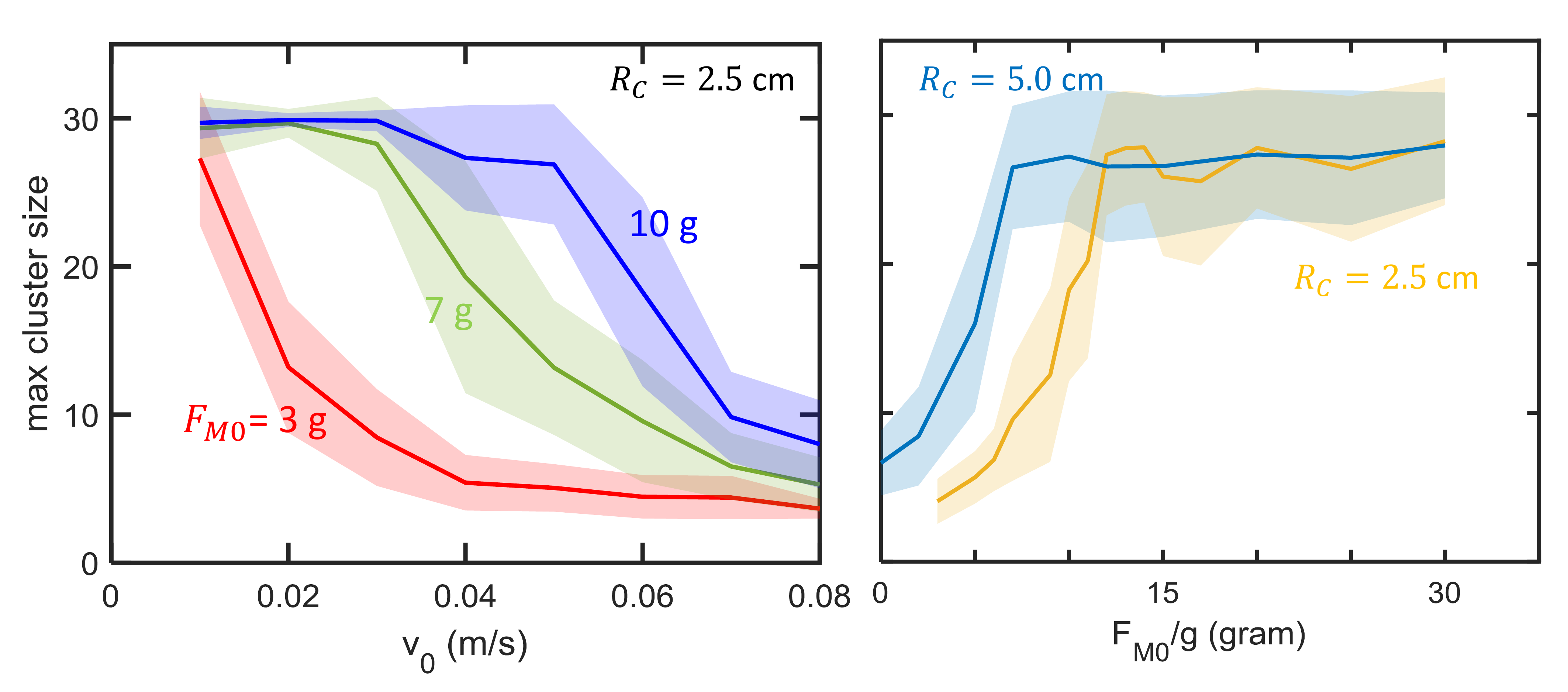}
    \caption{\textbf{Dependence of maximum cluster size $N_{MC}$ on BOBbot speed $v_0$ and curvature $R_C$.}
    The statistics shows the ensemble average from 20 simulations for each data point.}
    \label{fig:NMCvsV}
\end{figure}

\subsection*{S5. Detachment probability}

In \figtext~\ref{fig:alganalysis}A we showed the probability of a BOBbot detaching from three neighbors.
\figtext~\ref{fig:detachprob} shows the complete set of detachment probabilities for various magnetic attraction strengths $F_{M0}$.
We find that the $\lambda_\text{eff}$ increases with the magnetic attraction.
For a particular attraction strength, $\lambda_\text{eff}$ only varies within a small range such that the fitted exponential lines are nearly parallel to each other.

\begin{figure}[ht]
    \centering
    \includegraphics[width=0.7\textwidth]{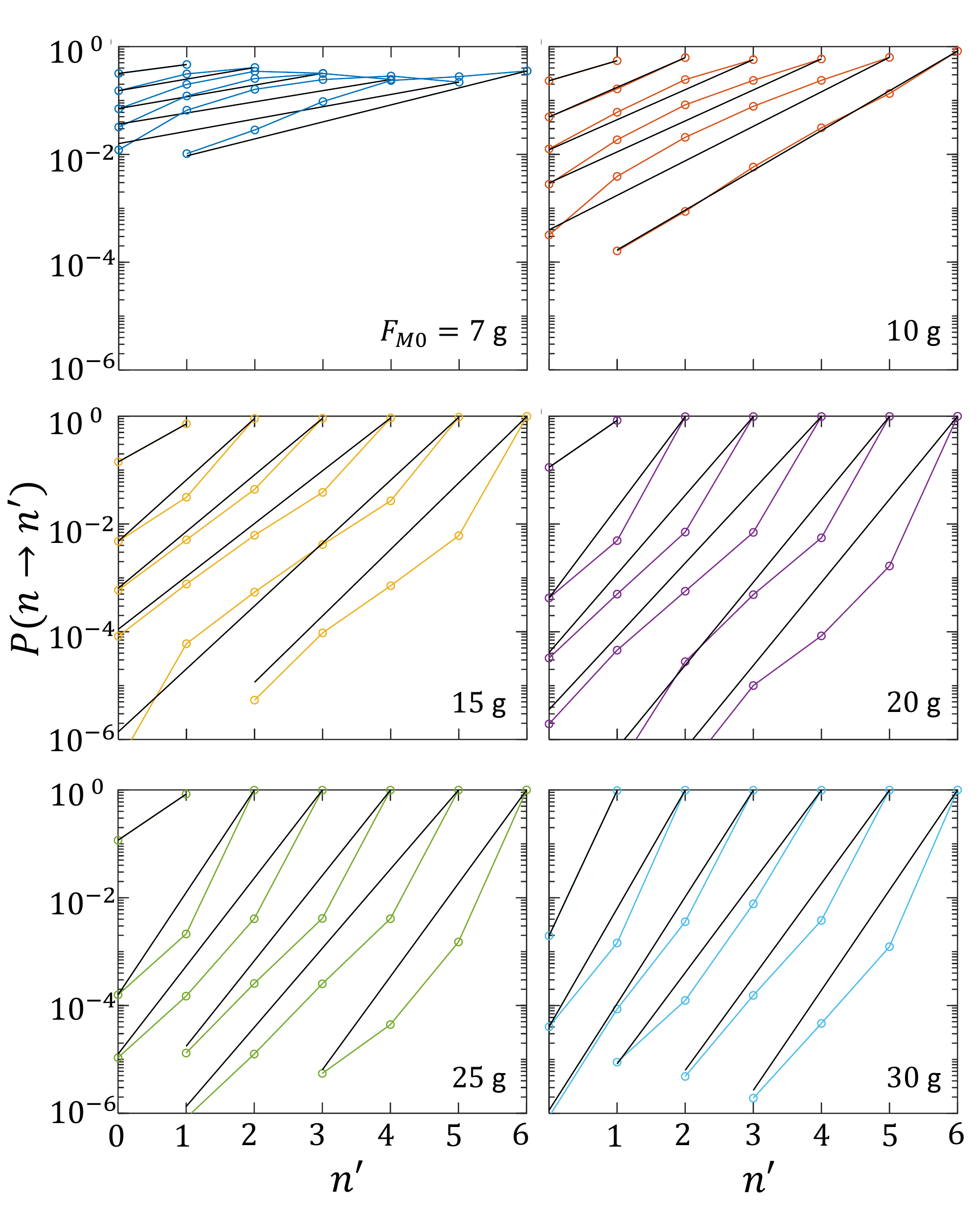}
    \caption{\textbf{Probability of detachment for various magnetic attraction.}
    Colored lines show the simulation data and black lines show the exponential fits.}
    \label{fig:detachprob}
\end{figure}

\subsection*{S6. Object transport by BOBbot collectives}

\begin{figure}[ht]
    \centering
    \includegraphics[width=0.6\textwidth]{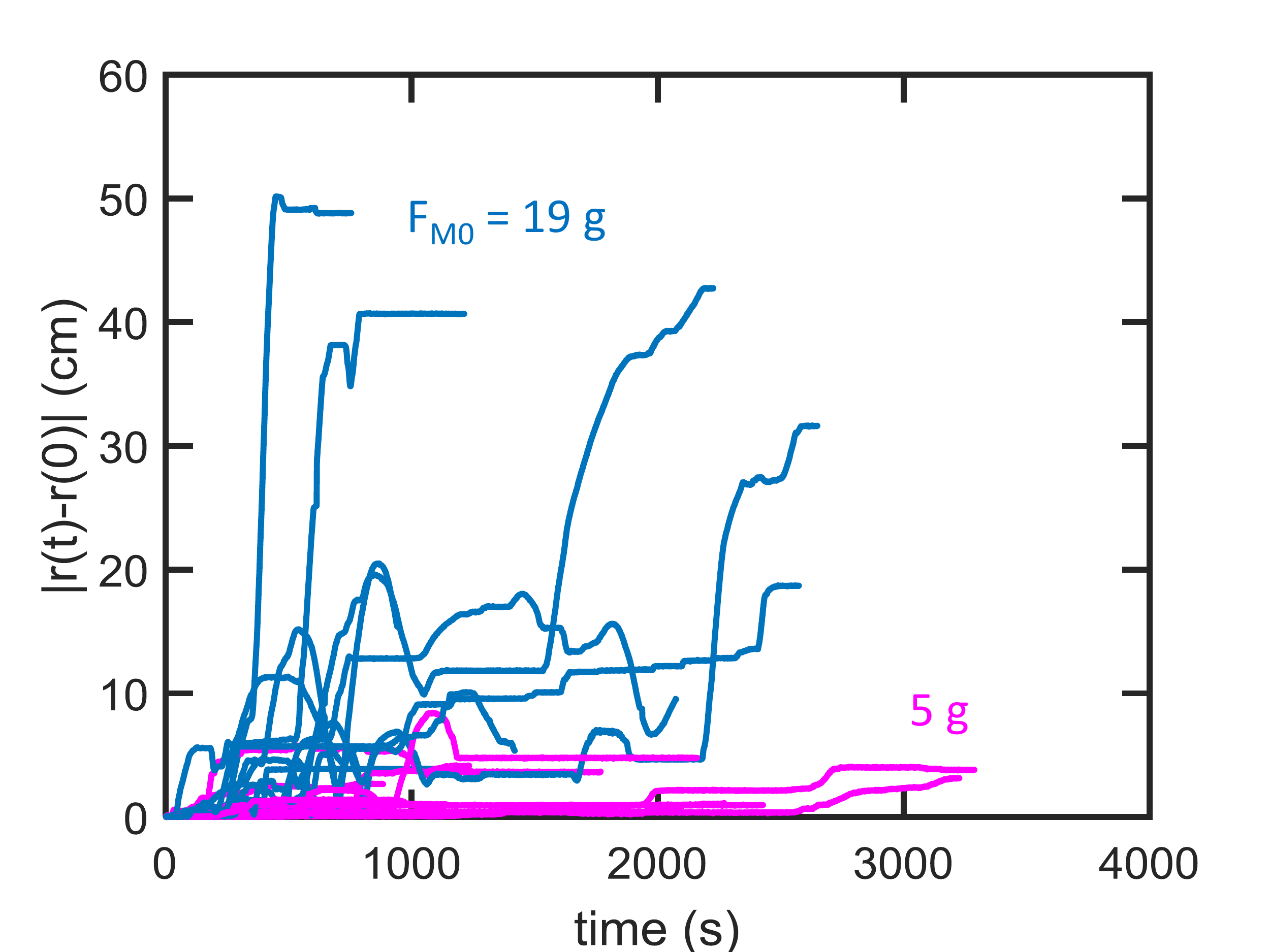}
    \caption{\textbf{Object transport trajectories.} Displacement of the box impurity over time for BOBbot collectives with $F_{M0} = 5$~g (magenta) and $19$~g (blue).}
    \label{fig:transporttrajectories}
\end{figure}

\begin{figure}[ht]
    \centering
    \includegraphics[width=0.5\textwidth]{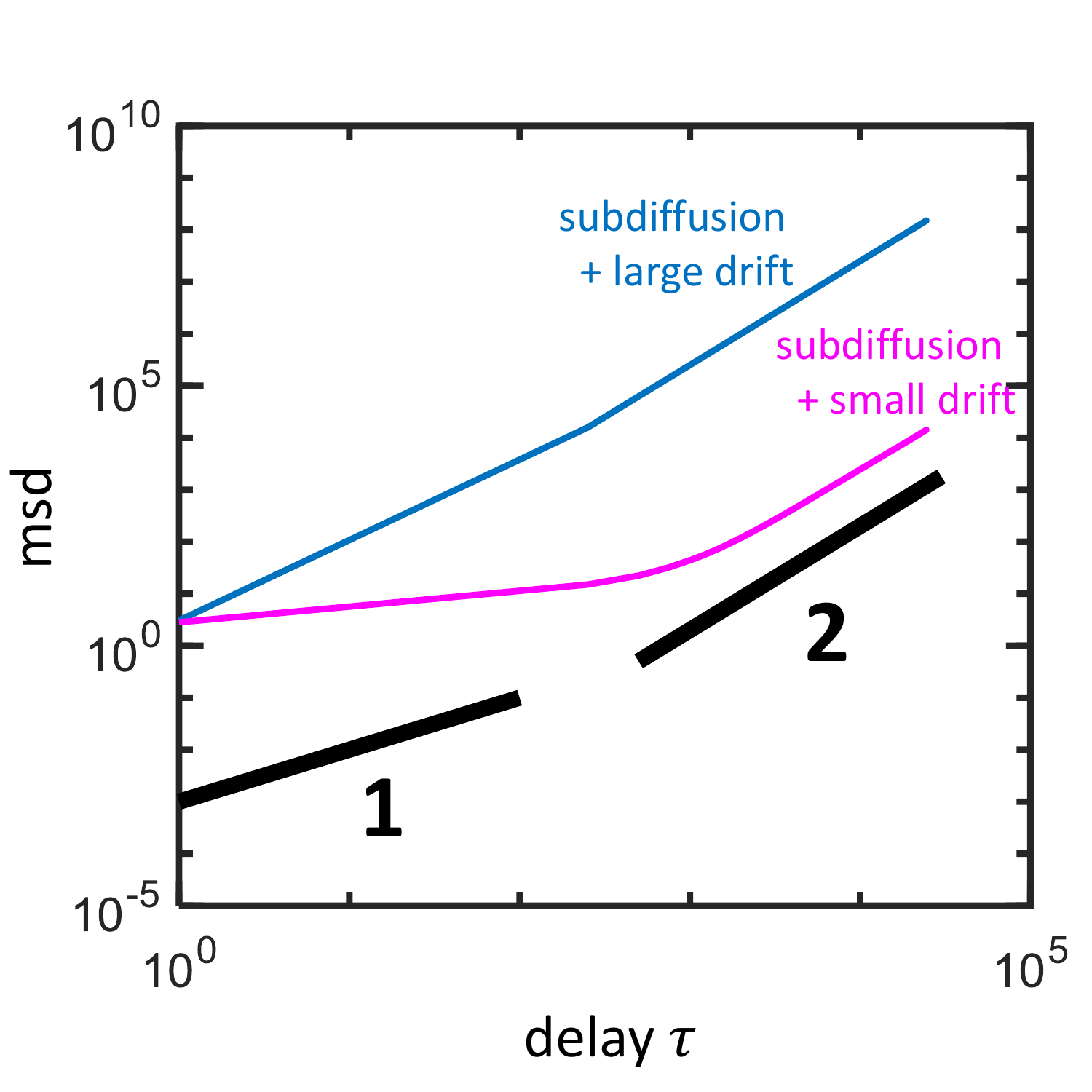}
    \caption{\textbf{Toy model for object transport.} 
    Both trajectories include subdiffusive motion and drift.
    The blue curve has large drift while the magenta curve has small drift, representing transport by the strongly and weakly attractive BOBbot collectives, respectively.
    This model produces a qualitative match with the experiments, demonstrating the origin of the different types of mean-squared displacement over time in the transport experiments.}
    \label{fig:transporttheory}
\end{figure}

\figtext~\ref{fig:transporttrajectories} shows the displacement of the box impurity over time for weakly attractive and strongly attractive BOBbot collectives.
While nearly all experimental runs with strongly attractive collectives exhibit rapid and large displacement, some of the weakly attractive collectives exhibit two-stage transport dynamics that start with very little displacement and eventually achieve larger displacement.
We posit that the weakly attractive collectives' two-stage dynamics are composed of subdiffusion arising from the BOBbots' collisions and a small drift caused by the persistent heterogeneity of the BOBbots around the impurity.
To validate this hypothesis, we developed a toy model in MATLAB where the subdiffusion $\mathbf{r}_H(t)$ with mean-squared displacement $\langle |\mathbf{r}_H(t+\tau)-\mathbf{r}_H(t)|^2\rangle_t=Ct^{2H}$ (for $H < 1/2$) is generated by the fractional Brownian motion generator (from the MATLAB Wavelet Toolbox) and is added to a drift motion $\mathbf{r}_D(t) = \mathbf{v}_D t$.
The relative magnitude difference between the subdiffusion and drift is chosen to match the experiments.
When the drift is small (i.e., $|\mathbf{v}_D| = 0.005$), we observe two-stage transport dynamics consistent with the experiments
(\figtext~\ref{fig:transporttheory}, magenta).
On the other hand, when the drift is dominant over the subdiffusion as in the strongly attractive collectives (i.e., $|\mathbf{v}_D| = 0.5$), the toy model reproduces the nearly ballistic trajectories observed in experiment (\figtext~\ref{fig:transporttheory}, blue).
In fact, the mean-squared displacement of this composed motion $\text{MSD}(\mathbf{r}_H+\mathbf{r}_D)$ is related to the purely subdiffusive $\text{MSD}(\mathbf{r}_H)$ as:
\begin{align*}
    \text{MSD}(\mathbf{r}) &= \langle|\mathbf{r}(t + \tau) - \mathbf{r}(t)|^2\rangle_t\\
    &= \langle|\mathbf{r}_H(t+\tau)+\mathbf{r}_D(t+\tau)-\mathbf{r}_H(t)-\mathbf{r}_D(t)|^2\rangle_t\\
    &= \langle|(\mathbf{r}_H(t + \tau) - \mathbf{r}_H(t)) - \mathbf{v}_D t|^2\rangle_t\\
    &= \langle|(\mathbf{r}_H(t + \tau) - \mathbf{r}_H(t))|^2\rangle_t + \langle 2\mathbf{v}_D \cdot (\mathbf{r}_H(t + \tau) - \mathbf{r}_H(t))\rangle_t + \langle|\mathbf{v}_D|^2 t^2\rangle_t\\
    &= \text{MSD}(\mathbf{r}_H) + 2\mathbf{v}_D \cdot \langle(\mathbf{r}_H(t + \tau) - \mathbf{r}_H(t))\rangle_t + |\mathbf{v}_D|^2 t^2\\
    &= Ct^{2H} + |\mathbf{v}_D|^2 t^2 \label{eq:transportTheo}
\end{align*}
where $\langle(\mathbf{r}_H(t + \tau) - \mathbf{r}_H(t))\rangle_t$ vanishes due to the isotropy of subdiffusion.
The final two equations demonstrate how the subdiffusive power is dominated by the ballistic power $2$ when the drift speed $|\mathbf{v}_D|$ is large.

DEM simulations of the impurity transport task use the same obstacle parameters (e.g., rectangular shape and a mass of $60$ g) as in the experiments and approximates the friction coefficient by $0.25$.
The friction of the box is integrated all over the contact area.
The friction on each point is anti-parallel to the instantaneous velocity.
Sample animations of the simulations can be found in Movie S7.

We additionally performed DEM simulations with intermittent time periods with no attraction in order to dissolve and disaggregate formed aggregates.
Movie S7 demonstrates how aggregates initially missing the obstacle to transport can find it successfully after disaggregating.
All three disaggregating sequences we investigated result in more successful transports to the boundary when compared to the base attractive case without disaggregating.

\begin{figure}[ht]
    \centering
    \includegraphics[width=0.9\textwidth]{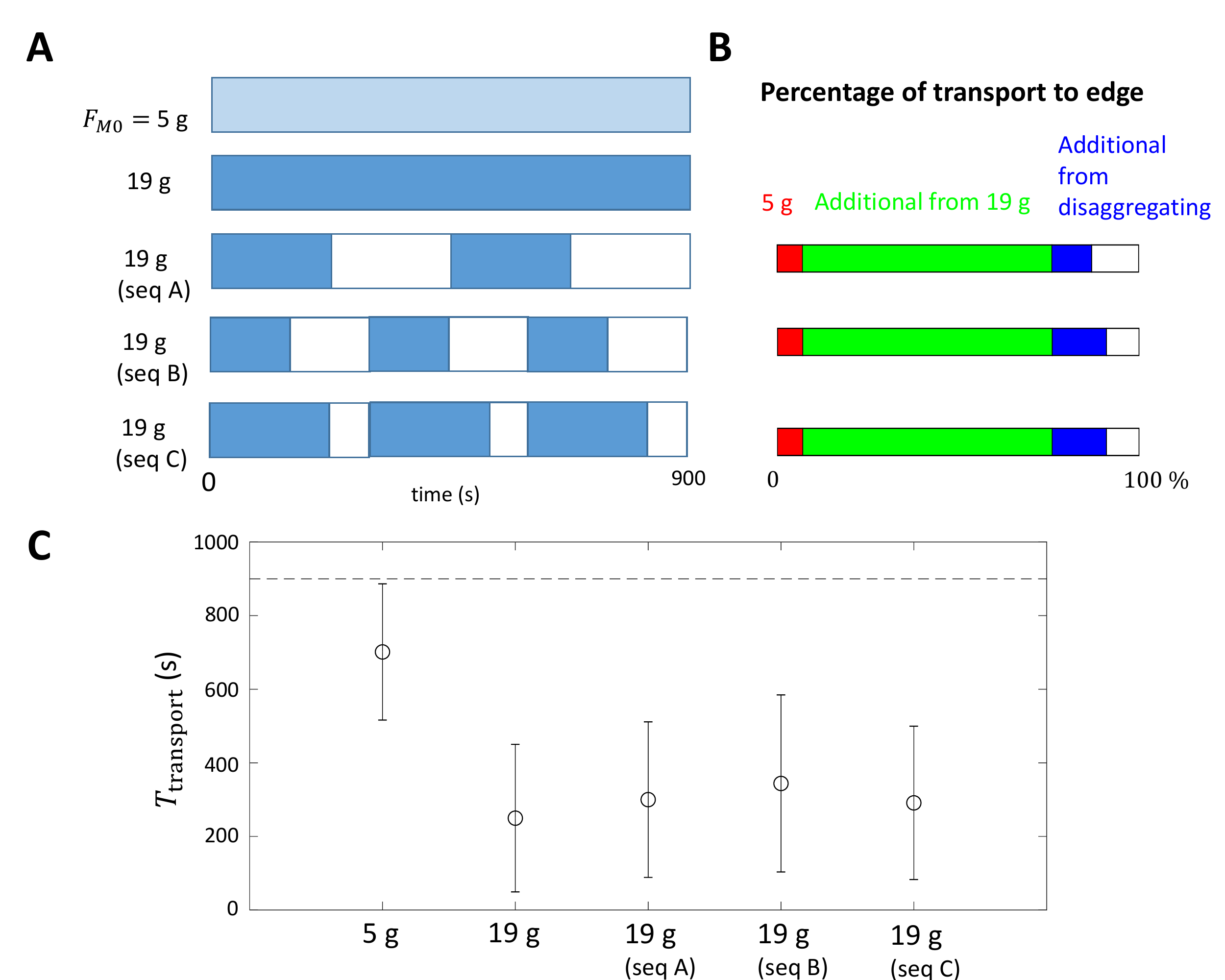}
    \caption{\textbf{Transport enhanced by disaggregating.}
    (\textbf{A}) Different patterns for magnetic strength over time used in simulations.
    The last three interleave periods of strong magnetic attraction with periods without any magnetic attraction (disaggregating).
    (\textbf{B}) The percentage of simulation runs achieved by weakly attractive collectives (red), strongly attractive collectives (green), and strongly attractive collectives with disaggregating (blue).
    The statistics use 100 simulations for each scenario.
    An animation of simulations using disaggregting sequence C can be found in Movie S7.
    (\textbf{C}) Comparison of transport time to the boundary among collectives that are weakly attractive, strongly attractive, and strongly attractive with disaggregating.}
    \label{fig:shuffling}
\end{figure}

\newpage

\subsection*{S7. From SOPS to the fixed-magnetization Ising model}

In this section, we prove that the SOPS algorithm can be mapped to an fixed-magnetization Ising model with coupling strength $J = \frac{1}{2\beta}\log{\lambda}$.
For a given SOPS configuration of particles in a bounded region of the lattice, construct a corresponding Ising lattice gas where the spin $\sigma$ of an occupied (resp., unoccupied) node in the SOPS is $+1$ (resp., $-1$) in the gas.
The SOPS algorithm has a transition probabilities $P_\text{SOPS} = \lambda^{-\Delta H_\text{SOPS}}$, where $H_\text{SOPS} = -n_\text{move}$ is its Hamiltonian and $n_\text{move}$ is the number of neighbors that the moving particle is leaving.
The Ising model has transition probabilities $P_\text{Ising} = \exp(-\beta\Delta H_\text{Ising})$, where $H_\text{Ising} = -\frac{1}{2}J\sum_{(i,j)}\sigma_i\sigma_j$ is its Hamiltonian and $\sigma_i \in \{-1, +1\}$ is the spin of site $i$.

\begin{lemma} \label{lem:sops2ising}
    Let $H_\text{SOPS}$ be the Hamiltonian of the SOPS algorithm, $H_\text{Ising}$ be the Hamiltonian of a fixed-magnetization Ising model, and $J$ be the coupling strength of the Ising model.
    Then for any particle move in the SOPS algorithm and the corresponding spin updates in the Ising model, we have $\Delta H_\text{Ising} = 2J\Delta H_\text{SOPS}$.
\end{lemma}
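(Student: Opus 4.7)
The plan is to install the natural bijection between SOPS configurations and fixed-magnetization Ising configurations (occupied $\leftrightarrow +1$, empty $\leftrightarrow -1$), observe that a single SOPS particle move is exactly a Kawasaki swap of neighboring opposite spins, and then compute both Hamiltonian changes directly from the local bond structure. Writing $n$ for the number of occupied neighbors of the moving particle at its initial location $u$ (necessarily excluding its destination $v$, which is empty in $\sigma$) and $n'$ for the number of occupied neighbors of the particle at its new location $v$ (necessarily excluding $u$, which is empty in $\tau$), I would express both sides of the identity as linear functions of $n$ and $n'$ and compare.

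For the Ising side, I would first observe that the $u$--$v$ bond is inert under the swap: $\sigma_u\sigma_v = -1$ both before and after, so this bond contributes nothing to $\Delta H_\text{Ising}$. The entire change therefore comes from the $2(d-1)$ bonds incident to $u$ or $v$ but not joining them, where $d$ is the lattice degree. For each of $u$'s other $d-1$ neighbors $w$, flipping $\sigma_u$ from $+1$ to $-1$ gives $\Delta(\sigma_u\sigma_w) = -2\sigma_w$; partitioning into the $n$ occupied neighbors and the $d-1-n$ empty ones yields a local subtotal of $-4n + 2(d-1)$. A symmetric computation at $v$, where $\sigma_v$ flips from $-1$ to $+1$, contributes $+4n' - 2(d-1)$. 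Summing, $\Delta\sum_{\langle i,j\rangle}\sigma_i\sigma_j = 4(n'-n)$, and then plugging into the stated form of $H_\text{Ising}$ gives $\Delta H_\text{Ising} = 2J(n - n')$ after unpacking the factor $\tfrac{1}{2}$ in the Hamiltonian.

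For the SOPS side, the definition $H_\text{SOPS} = -n_\text{move}$ depends only on the neighbor count of the moving particle at its current location; evaluating once in $\sigma$ (value $-n$) and once in $\tau$ (value $-n'$) gives $\Delta H_\text{SOPS} = n - n'$. Combining the two identities yields the claimed $\Delta H_\text{Ising} = 2J\,\Delta H_\text{SOPS}$, which in particular recovers the coupling $J = \tfrac{1}{2\beta}\log\lambda$ by matching the two Boltzmann factors $\lambda^{-\Delta H_\text{SOPS}}$ and $e^{-\beta\Delta H_\text{Ising}}$.

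The main obstacle I anticipate is purely bookkeeping: (i) the $u$--$v$ bond must be excluded from the sums around both endpoints to avoid double-counting its (zero) contribution; (ii) the neighbor counts $n$ and $n'$ are measured in different configurations and must exclude the correct neighbor in each; and (iii) the ordered-vs.-unordered-pair convention in $H_\text{Ising} = -\tfrac{1}{2}J\sum_{(i,j)}\sigma_i\sigma_j$ must be consistent with the SOPS side, since a miscounted factor of two there would turn the $2J$ in the lemma into a $J$ or a $4J$. Once these conventions are pinned down, the identity reduces to a direct spin-flip calculation on the $2(d-1)$ bonds that actually change under the Kawasaki swap.
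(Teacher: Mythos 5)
Your proposal is correct and follows essentially the same route as the paper: both establish the occupied/empty $\leftrightarrow$ $\pm 1$ correspondence, note the $u$--$v$ bond is unchanged under the swap, and compute $\Delta H_\text{Ising}$ by partitioning the remaining $2(d-1)$ incident bonds into occupied and unoccupied far endpoints, arriving at the same $2J$ ratio (your sign convention for $\Delta$ is the opposite of the paper's, but it flips both sides identically). Your explicit handling of the unordered-pair convention behind the $\tfrac{1}{2}$ factor matches what the paper does implicitly in its case analysis.
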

\begin{proof}
    Consider a particle moving from node $i$ to node $j$ in the SOPS algorithm and let $n_i$ (resp., $n_j$) be the number of neighbors the particle has at node $i$ (resp., node $j$).
    It is easy to see that $\Delta H_\text{SOPS} = n_j - n_i$ for this move.
    To calculate $\Delta H_\text{Ising}$, observe that the corresponding spin changes in the Ising model are $\sigma_i: +1 \to -1$ and $\sigma_j: -1 \to +1$.
    Let $z$ be the coordination number (i.e., degree) of the lattice.
    Consider all sites $k$ adjacent to $i$ and $j$; we have four cases:
    \begin{enumerate}
        \item $k$ is occupied and adjacent to $i$, so $\sigma_i\sigma_k: +1 \to -1$.
        There are $n_i$ such sites.
        \item $k$ is occupied and adjacent to $j$, so $\sigma_j\sigma_k: -1 \to +1$.
        There are $n_j$ such sites.
        \item $k \neq j$ is unoccupied and adjacent to $i$, so $\sigma_i\sigma_k: -1 \to +1$.
        There are $z - n_i - 1$ such sites.
        \item $k \neq i$ is unoccupied and adjacent to $j$, so $\sigma_j\sigma_k: +1 \to -1$.
        There are $z - n_j - 1$ such sites.
    \end{enumerate}
    To calculate $\Delta H_\text{Ising}$, we simply sum the spin changes of these cases; all other spins remain the same and thus cancel in the difference.
    We have:
    \begin{align*}
        \Delta H_\text{Ising} &= -\frac{1}{2}J((n_i - (z - n_i - 1) - n_j + (z - n_j - 1)) - (-n_i + (z - n_i - 1) + n_j - (z - n_j - 1)))\\
        &= -\frac{1}{2}J((2n_i - 2n_j) - (-2n_i + 2n_j))\\
        &= 2J(n_j - n_i)\\
        &= 2J\Delta H_\text{SOPS}
    \end{align*}
    This proves the lemma.
\end{proof}

\figtext~\ref{fig:IsingSOPS} shows examples of particle moves and their corresponding changes to the Ising Hamiltonian to illustrate the relationship established by Lemma~\ref{lem:sops2ising}.
Lemma~\ref{lem:sops2ising} shows that it is exactly when $J = \frac{1}{2\beta}\log\lambda$ that we achieve $P_\text{SOPS} = P_\text{Ising}$, which completes the mapping from the SOPS algorithm to the fixed-magnetization Ising model.

\begin{figure}[th]
    \centering
    \includegraphics[width=0.8\textwidth]{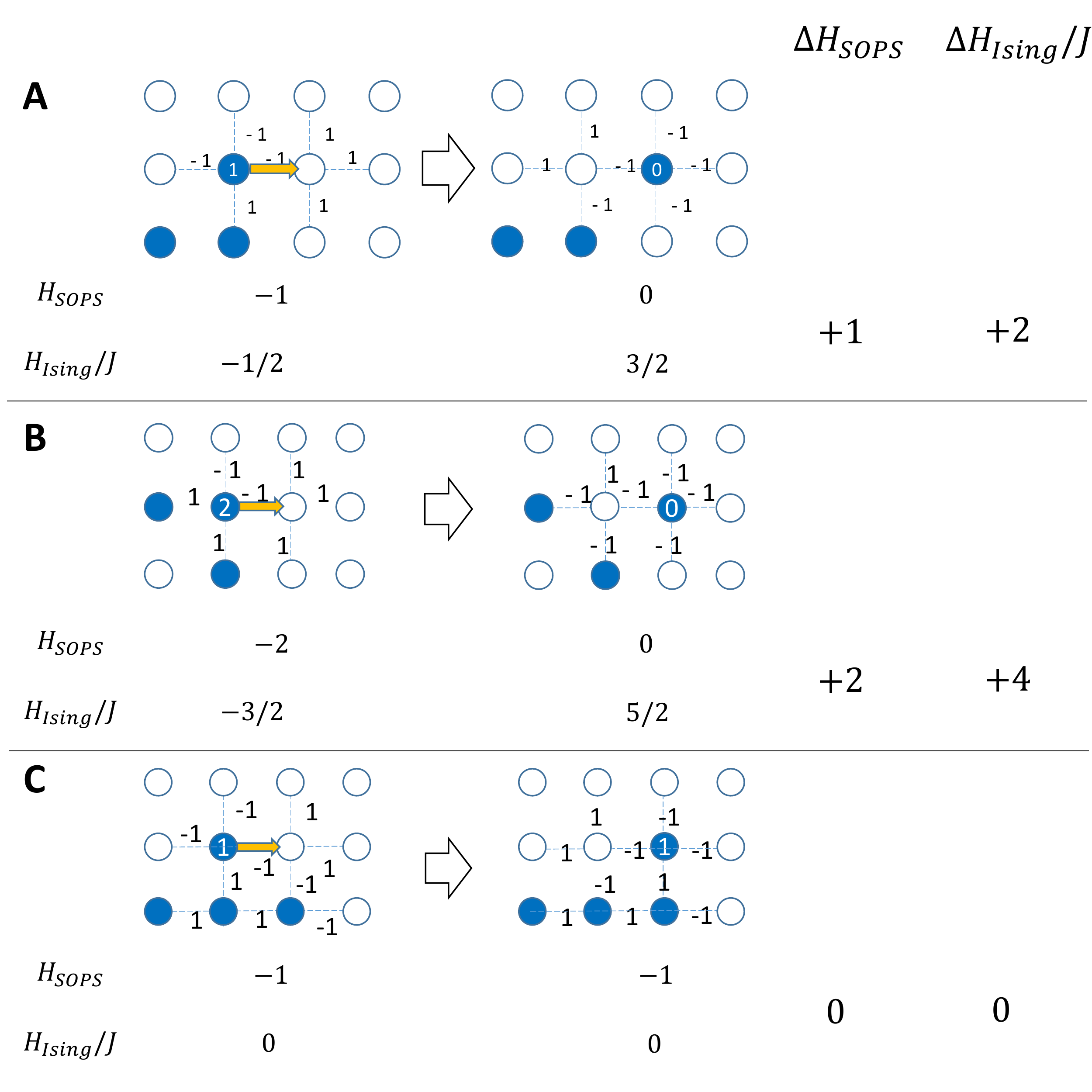}
    \caption{\textbf{Examples showing $\Delta H_\text{Ising} = 2J\Delta H_\text{SOPS}$.} 
    A particle moves to the right with a decrease of (\textbf{A}) 1, (\textbf{B}) 2, and (\textbf{C}) 0 neighbors, illustrated by the graphs showing the local configuration before and after the move.
    The numbers in black between sites $i$ and $j$ are the value of $\sigma_i\sigma_j$.
    The number of neighbors of the moving particle is shown in white.}
    \label{fig:IsingSOPS}
\end{figure}

\clearpage

\subsection*{S8. From the fixed-magnetization Ising model to the Cahn--Hilliard equation}

Penrose~\cite{penrose1991mean} has shown that the fixed-magnetization Ising model (i.e., Kawasaki dynamics) can be mapped to a Cahn--Hilliard equation by using a mean field treatment:
\[\frac{\partial u}{\partial t} = M\nabla^2\{f'(u) - \epsilon\nabla^2 u\}\]
where $M = \beta$, $\epsilon = J$, $f(u) = \beta^{-1}g(u) - \frac{1}{2}(z+1)\epsilon u^2$, $g'(u)=\text{arctanh}(u)$, and the prime denotes $d/du$.
Thus, we obtain:
\[\frac{\partial u}{\partial t} = \nabla^2\{\text{arctanh}(u) - (z + 1)\beta\epsilon u - \beta\epsilon\nabla^2 u\}\]
which is the standard Cahn--Hilliard equation:
\[\frac{\partial u}{\partial t} = \nabla^2(\Phi'(u) - \gamma\nabla^2 u)\]
with surface tension $\gamma = \beta J$ and $\Phi'(u) = \text{arctanh}(u) - (z + 1)\gamma u$.
Here, $z$ is the coordination number of the lattice: $4$ for square and $6$ for hexagonal.
Along with $J = \frac{1}{2\beta}\log\lambda$ proved in the previous section, we arrive at the connection between the SOPS algorithm and the Cahn--Hilliard equation with surface tension $\gamma = \frac{1}{2}\log\lambda$.

When $\lambda = 1$, the Cahn--Hilliard equation has no surface tension as $\gamma = \frac{1}{2}\log 1 = 0$, $\Phi'(u)$ has only one zero, and $\Phi$ has only one minimum.
As $\lambda$ increases, the surface energy $\gamma$ increases as well.
When $\gamma > 1/(z+1)$, $\Phi'$ has three zeros and $\Phi$ has double wells (\figtext~\ref{fig:bifurcation}), yielding a critical $\lambda_c = e^{2/7} \approx 1.33$ in the hexagonal lattice and $\lambda_c = e^{2/5} \approx 1.49$ in the square lattice.
The critical value $\lambda_c \approx 1.33$ for the hexagonal lattice lies within the $\lambda_c \in (1.02, 5.66)$ range predicted by the SOPS theory and exhibited by the BOBbot experiments.

\begin{figure}[th]
    \centering
    \includegraphics[width=0.7\textwidth]{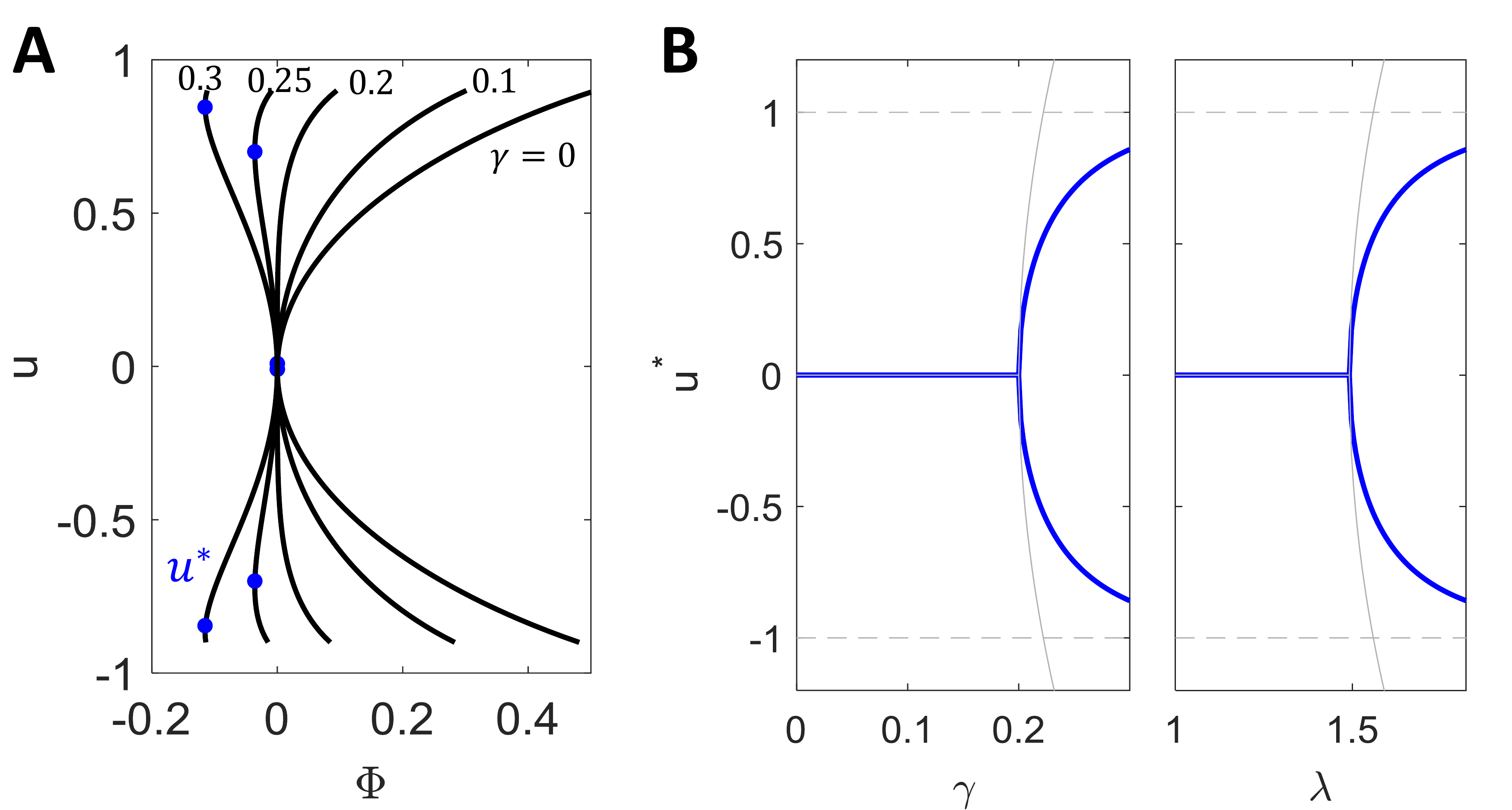}
    \caption{\textbf{Critical surface tension $\gamma$ and bias parameter $\lambda$.} 
    (\textbf{A}) Homogeneous free energy $\Phi$ for different surface tensions $\gamma$.
    (\textbf{B}) Position of the well for different surface tensions $\gamma$ and its corresponding bias parameter $\lambda$.
    The solid gray line shows the value $u^* = \pm\sqrt{15(\gamma-1/5)}$ from the Taylor expansion of $\Phi$ up until $\mathcal{O}(u^3)$.}
    \label{fig:bifurcation}
\end{figure}

When the surface tension is above the critical point, the characteristic length $\ell$ grows as $t^{1/3}$ \cite{toral1995large-sm}. Figure S15 shows how $\ell$ grows with time when the free energy mapped above when $\lambda$ is below and above the critical point for square lattice. To deal with the singular behavior of $\text{arctanh}$, linear extension around $\pm 1$ is used \cite{chen2019positivity-sm}. $\ell$ uses the first zero of the spatial correlation function $G(r)$, which is the Fourier transform of the structure factor \cite{toral1995large-sm}. Given the area scales as $N_{MC}\propto P_{MC}^{1/0.66} \approx P_{MC}^{3/2} = \ell ^{3/2}$ for the aggregated case in $N \sim 100 \ll \infty$, cluster size grows with time as $t^{1/2}$.

\begin{figure}[th]
    \centering
    \includegraphics[width=0.8\textwidth]{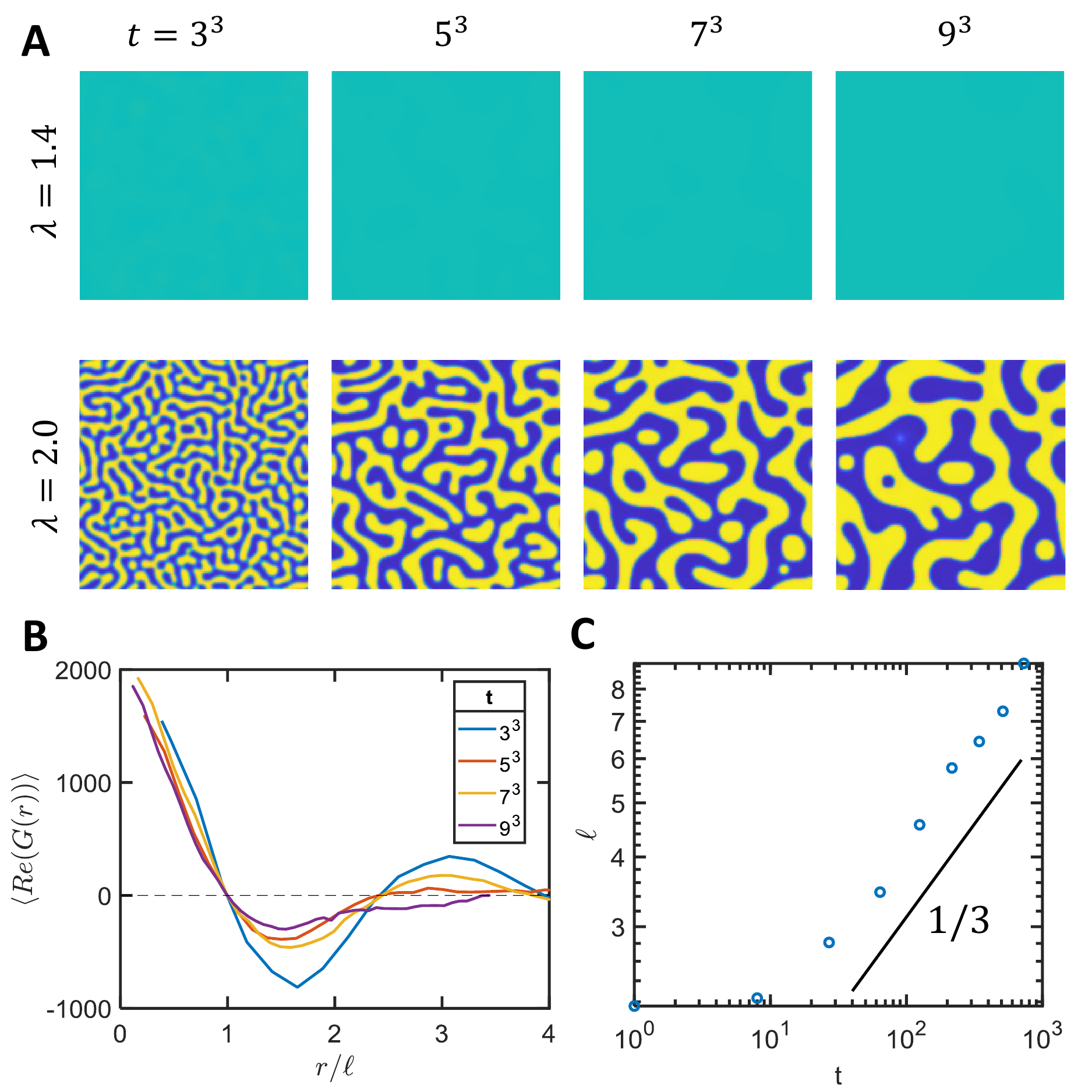}
    \caption{\textbf{Pattern formation below and above critical $\lambda$.} 
    (\textbf{A}) Simulation of the Cahn--Hilliard equation for $\lambda = 1.4$ below the critical point and $\lambda = 2.0$ above the critical point.
    Both simulations use a $128 \times 128$ grid and start with a uniform distribution of $u \sim \mathcal{U}[-0.1,0.1]$.
    (\textbf{B}) Spatial correlation function normalized by the correlation length $\ell$.
    (\textbf{C}) The increase of $\ell$ with time shows a power law with exponent $1/3$.}
    \label{fig:CHsim}
\end{figure}

\clearpage

\subsection*{S9. Approaching $P_{MC}\propto N_{MC}^{1/2}$ with thermodynamic limit}

The SOPS theory predicts that strongly attractive ensembles should produce aggregates that are both large and compact, namely, that the perimeter of the largest connected component $P_{MC}$ should scale with its size $N_{MC}$ with a $1/2$ power.
However, data from BOBbot experiments follow a slightly larger $0.66 \pm 0.07$ exponent (\figtext~\ref{fig:thermoDynLim}, blue).
This discrepancy is due in part to boundary and finite-size effects, so in simulation we investigated periodic boundary conditions (\figtext~\ref{fig:thermoDynLim}, red).
These simulations have a $0.59 \pm 0.18$ exponent that better aligns with the predictions of the SOPS theory.

\begin{figure}[th]
    \centering
    \includegraphics[width=0.7\textwidth]{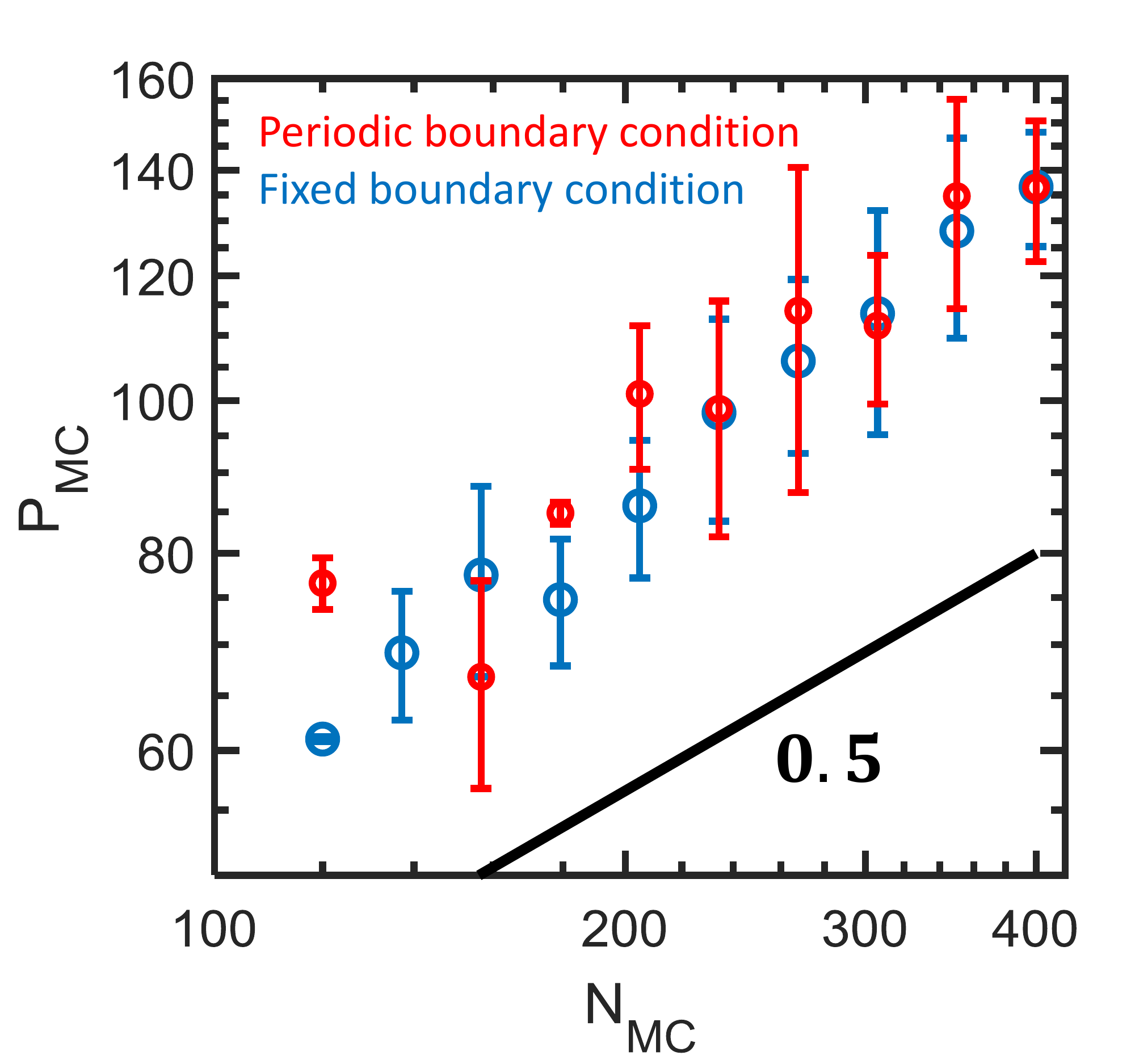}
    \caption{\textbf{Approaching $P_{MC} \propto N_{MC}^{1/2}$ with periodic boundary conditions.}
    Scaling between the largest component's size $N_{MC}$ and perimeter $P_{MC}$ in number of BOBbots for simulated systems of 100--400 BOBbots with $F_{M0} = 19$ g using fixed boundary conditions (blue) and periodic boundary conditions (red).
    The fixed boundary conditions achieve a scaling power of $0.66 \pm 0.07$ while periodic boundary conditions achieve a scaling power of $0.59 \pm 0.18$.}
    \label{fig:thermoDynLim}
\end{figure}

\end{document}